\title{Dual representations for systemic risk measures}
\author{\c{C}a\u{g}{\i}n Ararat\thanks{Assistant Professor, Bilkent University, Department of Industrial Engineering, Ankara, Turkey, cararat@bilkent.edu.tr.}
	\and
	Birgit Rudloff\thanks{Associate Professor, Vienna University of Economics and Business, Institute for Statistics and Mathematics, Vienna, Austria, brudloff@wu.ac.at.}
}
\date{July 31, 2019}
\makeatletter \renewenvironment{proof}[1][\proofname] {\par\pushQED{\qed}\normalfont\topsep6\p@\@plus6\p@\relax\trivlist\item[\hskip\labelsep\bfseries#1\@addpunct{.}]\ignorespaces}{\popQED\endtrivlist\@endpefalse} \makeatother
\newtheorem{thm}{Theorem}[section]
\newtheorem{cor}[thm]{Corollary}
\newtheorem{lem}[thm]{Lemma}
\newtheorem{prop}[thm]{Proposition}
\newtheorem{assumption}[thm]{Assumption}
\newtheorem{defn}[thm]{Definition}
\theoremstyle{definition}
\newtheorem{example}[thm]{Example}
\newtheorem{rem}[thm]{Remark}
\numberwithin{equation}{section}
\newcommand{\R}{\mathbb{R}}
\newcommand{\sm}{\!\setminus\!}
\DeclareMathOperator{\cl}{cl}
\DeclareMathOperator{\interior}{int}
\DeclareMathOperator{\epi}{epi}
\DeclareMathOperator{\esssup}{ess\,sup}
\let\abs=\envert
\newcommand{\F}{\mathcal{F}}
\newcommand{\X}{\mathcal{X}}
\newcommand{\A}{\mathcal{A}}
\renewcommand{\L}{\mathcal{L}}
\newcommand{\E}{\mathbb{E}}
\newcommand{\M}{\mathcal{M}}
\renewcommand{\Pr}{\mathbb{P}}
\newcommand{\Q}{\mathbb{Q}}
\renewcommand{\a}{\alpha}
\newcommand{\s}{\text{sys}}
\renewcommand{\S}{\mathbb{S}}
\renewcommand{\H}{\mathcal{H}}
\newcommand{\is}{\text{ins}}
\newcommand{\sen}{\text{sen}}
\newcommand{\sys}{\text{sys}}
\newcommand{\of}[1]{\ensuremath{\left( #1 \right)}}
\newcommand{\cb}[1]{\ensuremath{ \left\{ #1 \right\} }}
\newcommand{\sqb}[1]{\ensuremath{ \left[ #1 \right] }}
\newcommand{\norm}[1]{\ensuremath{ \left\Vert #1 \right\Vert }}
\newcommand{\ip}[1]{\ensuremath{ \left\langle #1 \right\rangle }}
\def\prehp(#1,#2){\ensuremath{  #1 \cdot #2 }}
\begin{document}
\maketitle
\thispagestyle{empty}

\begin{abstract}
The financial crisis showed the importance of measuring, allocating and regulating systemic risk. Recently, the systemic risk measures that can be decomposed into an aggregation function and a scalar measure of risk, received a lot of attention. In this framework, capital allocations are added after aggregation and can represent bailout costs. More recently, a framework has been introduced, where institutions are supplied with capital allocations before aggregation. This yields an interpretation that is particularly useful for regulatory purposes. In each framework, the set of all feasible capital allocations leads to a multivariate risk measure. In this paper, we present dual representations for scalar systemic risk measures as well as for the corresponding multivariate risk measures concerning capital allocations. Our results cover both frameworks: aggregating after allocating and allocating after aggregation. As examples, we consider the aggregation mechanisms of the Eisenberg-Noe model as well as those of the resource allocation and network flow models.\\
\\[-5pt]
\textbf{Keywords and phrases: }systemic risk, risk measure, financial network, dual representation, convex duality, penalty function, relative entropy, multivariate risk, shortfall risk\\
\\[-5pt]
\textbf{Mathematics Subject Classification (2010): }91B30, 46N10, 46A20, 26E25, 90C46.
\end{abstract}

\section{Introduction}\label{intro}

Systemic risk can be regarded as the inability of an interconnected system to function properly. In the financial mathematics community, defining, measuring and allocating systemic risk has been of increasing interest especially after the recent financial crisis. This paper is concerned with the representations and economic interpretations of some recently proposed measures of systemic risk from a convex duality point of view.

Canonically, network models are used for the analysis of systemic risk as proposed by the pioneering work of \cite{eisnoe}. In this model, the institutions of an interconnected financial system are represented by the nodes of a network and the liabilities of these institutions to each other are represented on the arcs. Under mild nondegeneracy conditions, it is proved in \cite{eisnoe} that the system can reach an equilibrium by realizing a unique clearing payment mechanism computed as the solution of a fixed point problem. The Eisenberg-Noe model is generalized in various directions since then, for instance, by taking into account illiquidity \citep{clearing}, default costs \citep{rogersveraart}, randomness in liabilities \citep{cim}, central clearing \citep{amini}, to name a few. The reader is refered to \cite{kabanovsurvey} for a survey of various clearing mechanisms considered in the literature.

More recently, several authors have considered the question of measuring systemic risk in relation to the classical framework of monetary risk measures in \cite{artzner}. The following three-step structure can be seen as a blueprint for the systemic risk measures defined in the recent literature \citep{amini, samuel, fouque, dito, cim, syst, kromer, hoffmann}.
\begin{itemize}
	\item \textbf{Aggregation function: }The aggregation function quantifies the impact that the random shocks of the system have on society by taking into account the interconnectedness of the institutions. It is a multivariate function that takes as input the random wealths (shocks) of the individual institutions and gives as output a scalar quantity that represents the impact of the financial system on society or on real economy. In the Eisenberg-Noe model, for instance, one can simply add society to the financial network as an additional node and define the value of the aggregation function as the net equity of society after clearing payments are realized. More simplistic choices of the aggregation function can consider total equities and losses, only total losses, or certain utility functions of these quantities; see \cite{cim, kromer}.
	
	\item \textbf{Acceptance set: }As the wealths of the institutions are typically subject to randomness, the aggregation function outputs a random quantity accordingly. The next step is to test these random values with respect to a criterion for riskiness, which is formalized by the notion of the acceptance set $\A$ of a monetary risk measure $\rho$. For instance, one can consider the acceptance set of the (conditional) value-at-risk at a probability level and check if the random total loss of the system is an element of this acceptance set.
	
	\item \textbf{Systemic risk measure: }The last step is to define the systemic risk measure based on the choices of the aggregation function $\Lambda$ and the acceptance set $\A$. \cite{cim} proposed the first axiomatic study for measuring systemic risk based on monetary risk measures, where the systemic risk measure is defined as
	\begin{equation}\label{scalarsyst}
	\rho^{\is}(X)=\rho(\Lambda(X))=\inf\cb{k\in\R\mid \Lambda(X)+k \in \A},
	\end{equation}
	where the argument $X$ is a $d$-dimensional random vector denoting the wealths of the institutions. In a financial network model, the value of this systemic risk measure can be interpreted as the minimum total endowment needed in order to make the equity of society acceptable. If one is interested in the individual contributions of the institutions to systemic risk, $\rho^{\is}(X)$ needs to be allocated back to these institutions. To be able to consider the \emph{measurement} and \emph{allocation} of systemic risk at the same time, the values of systemic risk measures are defined in \cite{syst} as \emph{sets of vectors} of individual capital allocations for the institutions. Hence, the systemic risk measures in \cite{syst} map into the power set of $\R^d$, that is, they are \emph{set-valued} functionals. For instance, the set-valued counterpart of $\rho^{\is}(X)$ is defined as
	\[
	R^{\is}(X) = \cb{z\in\R^d\mid \Lambda(X)+\sum_{i=1}^d z_i \in \A}.
	\]
	The risk measures $\rho^{\is}$ and $R^{\is}$ are considered \emph{insensitive} as they do not take into account the effect of the additional endowments in the aggregation procedure. Thus, they can be interpreted as bailout costs: the costs of making a system acceptable after the random shock $X$ of the system has impacted society. In contrast to this, a \emph{sensitive} version is proposed in \cite{syst} (and in \cite{fouque} as scalar functionals) where the aggregation function inputs the augmented wealths of the institutions:
	\[
	R^{\sen}(X) = \cb{z\in\R^d \mid \Lambda(X+z) \in \A}.
	\]
	In analogy to \eqref{scalarsyst} one can consider the smallest overall addition of capital
	\begin{equation}\label{ex_sens}
	\rho^{\sen}(X)=\inf\cb{\sum_{i=1}^d z_i\mid \Lambda(X+z) \in \A},
	\end{equation}
	that makes the impact of $X$ on society acceptable. But in contrast to the insensitive case, the sensitive risk measures $R^{\sen}$ and $\rho^{\sen}$ can be used for regulation: by enforcing to add the capital vector $z\in\R^d$ to the wealth of the banks as the impact on society \emph{after} capital regulation, that is $\Lambda(X+z)$, is made acceptable. They are called sensitive as they take the impact of capital regulations on the system into account.
\end{itemize}

This paper provides dual representation results for the systemic risk measures $R^\is$ and $R^\sen$ as well as for their scalarizations $\rho^{\is}$ and $\rho^{\sen}$ in terms of three types of dual variables: probability measures for each of the financial institutions, weights for each of the financial institutions, and probability measures for society. The probability measures can be interpreted as possible models governing the dynamics of the institutions/society. Each time one makes a guess for these models, a \emph{penalty} is incurred according to ``how far" these measures are from the true probability measure of the financial system. Then, the so-called \emph{systemic penalty function} (Definition~\ref{syspen}) is computed as the minimized value of this penalty over all choices of the probability measure of society. According to the dual representations, the systemic risk measures $R^\is$ and $R^\sen$ collect the capital allocation vectors whose certain weighted sums pass a threshold level determined by the systemic penalty function.

In terms of economic interpretations, a convenient feature of the dual representations is that (the objective function of) the systemic penalty function has an additive structure in which the contributions of the network topology (encoded in the conjugate function of $\Lambda$) and the choice of the regulatory criterion for riskiness (encoded in the penalty function of $\rho$) are transparent. Moreover, the first term dealing with the network topology can be regarded as a \emph{multivariate divergence functional}, for instance, a multivariate relative entropy, and it can be written in a simple analytical form in many interesting cases where the aggregation function $\Lambda$ itself, as the primal object, is defined in terms of an optimization problem. For instance, this is the case for the Eisenberg-Noe model without (Section~\ref{noccp}) and with (Section~\ref{withccp}) central clearing, as well as for the classical resource allocation and network flow models of operations research.

In the general (non-systemic) setting, dual representations for risk measures are well-studied; see \cite{fs:sf} for univariate risk measures, \cite{hh:duality} for set-valued risk measures, and \cite{multipleeligible} for scalar multivariate risk measures. It should be noted that the dual representations of the present paper do not follow as consequences of the dual representations of the general framework. This is because both the insensitive and the sensitive  systemic risk measures are defined in terms of the \emph{composition} of the univariate risk measure $\rho$ and the aggregation function $\Lambda$. In contrast to the existing duality results for general risk measures, the results of the paper ``dualize" both $\rho$ and $\Lambda$. In the special case where $\Lambda$ is a linear function, this can be achieved by the well-known Fenchel-Rockafellar theorem. On the other hand, the general case where $\Lambda$ is a concave function is less well-known. In our arguments, we use two results dealing with the general case: \citet[Theorem~2.8.10]{zalinescu}, which works under some continuity assumptions and gives a precise result for the conjugate, and the more recent \citet[Theorem~3.1]{radu}, which works under very mild conditions but identifies the conjugate up to a closure operation.

In the more traditional insensitive setting for systemic risk measures, \citet[Theorem~3]{cim} provides a dual representation for $\rho^\is$ assuming that the underlying probability space is finite and $\Lambda, \rho$ are positively homogeneous functions. Under these assumptions, $\rho^{\is}$ can be computed as the optimal value of a finite-dimensional linear optimization problem and the corresponding dual problem is regarded as a dual reprensentation for $\rho^\is$. This result is generalized by \cite{kromer} for general probability spaces, convex $\rho$ and concave $\Lambda$. It should be noted that the dual representation for $\rho^\is$ given in the present paper provides a different economic interpretation than the ones in \cite{cim,kromer}. In particular, \citet[Theorem~3]{cim} is stated in terms of sub-probability measures (sub-stochastic vectors) and the ``remaining" mass to extend such a measure to a probability measure is interpreted as a probability assigned to an artificial \emph{scenario} $\omega_0$ added to the underlying probability space. In contrast, Theorem~\ref{mainthm} and  Proposition~\ref{scalarizations} of the present paper are stated in terms of probability measures corresponding to the institutions as well as an additional probability measure corresponding to society. Note that society is considered as an additional \emph{node} in the network of institutions.

To the best of our knowledge, dual representations of systemic risk measures in the sensitive case ($R^\sen$ and $\rho^{\sen}$) have not yet been studied in the literature besides a few special cases. Among the related works, \citet[Theorem~2.10]{samuel} provides a dual representation for $\rho^\sen$ in the special case where $\A=\cb{X\mid \E\sqb{-X}\leq 0}$, that is, $\rho$ is the negative expected value. More recently, \citet[Section~3]{fairness} studies the dual representation of a type of sensitive systemic risk measure which considers random capital allocations (different from the one in the present paper) where the aggregation function $\Lambda$ is a decomposable sum of univariate utility functions and $\rho$ is the negative expected value.

The rest of the paper is organized as follows. In Section~\ref{general}, the definitions of the systemic risk measures are recalled along with some basic properties. In Section~\ref{main}, the main results of the paper are collected in Theorem~\ref{mainthm} followed by some comments on the economic interpretation of these results. The form of the dual representations under some canonical aggregation functions, including that of the Eisenberg-Noe model, are investigated in Section~\ref{examples}. A model uncertainty representation of the sensitive systemic risk measure is discussed in Section~\ref{modeluncertainty}. Finally, Section~\ref{pf}, the Appendix, is devoted to proofs.

\section{Insensitive and sensitive systemic risk measures}\label{general}

We consider an interconnected financial system with $d$ institutions. By a \emph{realized state of the system}, we mean a vector $x=(x_1,\ldots,x_d)^{\mathsf{T}}\in\R^d$, where $x_i$ denotes the wealth of institution $i$. To compare two possible states $x,z\in\R^d$, we use the componentwise ordering $\leq$ on $\R^d$; hence, $x\leq z$ if and only if $x_i \leq z_i$ for every $i\in\cb{1,\ldots,d}$. We write $\R^d_+=\cb{x\in\R^d\mid 0\leq x}$.

Given a realized state, the interconnectedness of the system is taken into account through a single quantity provided by the so-called \emph{aggregation function}. Formally speaking, this is a function $\Lambda\colon\R^d\to\R$ satisfying the following properties.
\begin{enumerate}[\bf (i)]
	\item \textbf{Increasing:} $x\leq z$ implies $\Lambda(x)\leq \Lambda(z)$ for every $x,z\in\R^d$.
	\item \textbf{Concave:} It holds $\Lambda(\gamma x + (1-\gamma) z) \geq \gamma \Lambda(x) + (1-\gamma) \Lambda(z)$ for every $x,z\in\R^d$ and $\gamma  \in [0,1]$.
	\item \textbf{Non-constant:} $\Lambda$ has at least two distinct values.
\end{enumerate}
As discussed in Section~\ref{intro}, $\Lambda(x)$ can be interpreted as the impact of the system on society given that the state of the system is $x\in\R^d$. An overall increase in the wealth of the system is anticipated to have a positive impact on society, which is reflected by the property that $\Lambda$ is increasing. Similarly, the concavity of $\Lambda$ reflects that diversification in wealth has a positive impact on society. Finally, the last condition eliminates the trivial case that $\Lambda$ is a constant, which ensures that the set $\Lambda(\R^d)\coloneqq\cb{\Lambda(x)\mid x\in\R^d}$ has an interior point.

To model the effect of a financial crisis, a catastrophic event, or any sort of uncertainty affecting the system, we assume that the state of the system is indeed a random vector $X$ on a probability space $(\Omega,\F,\Pr)$. Hence, the impact on society is realized to be $\Lambda(X(\omega))$ if the observed scenario for the uncertainty is $\omega\in\Omega$. For convenience, we assume that $X\in L_d^\infty$, where $L_d^\infty$ is the space of $d$-dimensional essentially bounded random vectors that are distinguished up to almost sure equality. Consequently, the impact on society is a univariate random variable $\Lambda(X)\in L^\infty$, where $ L^\infty = L^\infty_1$. Throughout, we call $\Lambda(X)$ the \emph{aggregate value} of the system.

The systemic risk measures we consider are defined in terms of a measure of risk for the aggregate values. To that end, we let $\rho\colon L^\infty\to\R$ be a convex monetary risk measure in the sense of \cite{artzner}. More precisely, $\rho$ satisfies the following properties. (Throughout, (in)equalities between random variables are understood in the $\Pr$-almost sure sense.)
\begin{enumerate}[\bf (i)]
	\item \textbf{Monotonicity: }$Y_1\geq Y_2$ implies $\rho(Y_1)\leq \rho(Y_2)$ for every $Y_1, Y_2\in L^\infty$.
	\item \textbf{Translativity: }It holds $\rho(Y+y)=\rho(Y)-y$ for every $Y\in L^\infty$ and $y\in\R$.
	\item \textbf{Convexity: }It holds $\rho(\gamma Y_1 + (1-\gamma)Y_2)\leq \gamma \rho(Y_1)+(1-\gamma)\rho(Y_2)$ for every $Y_1, Y_2\in L^\infty$ and $\gamma\in[0,1]$.
	\item \textbf{Fatou property: }If $(Y_n)_{n\geq 1}$ is a bounded sequence in $L^\infty$ converging to some $Y\in L^\infty$ almost surely, then $\rho(Y)\leq \liminf_{n\rightarrow\infty}\rho(Y_n)$.
\end{enumerate}
The risk measure $\rho$ is characterized by its so-called \emph{acceptance set} $\A\subseteq L^\infty$ via the following relationships
\[
\A =\cb{Y\in L^\infty \mid \rho(Y)\leq 0},\quad \quad \rho(Y)=\inf\cb{y\in\R\mid Y+y\in\A}.
\]
Hence, the aggregate value $\Lambda(X)$ is considered acceptable if $\Lambda(X)\in \A$.

As a well-definedness assumption for the systemic risk measures of interest, we will need the following, where $\interior \Lambda(\R^d)$ denotes the interior of the set $\Lambda(\R^d)$.
\begin{assumption}\label{assume}
	$\rho(0)\in \interior\Lambda(\R^d)$.
\end{assumption}

\begin{rem}Note that Assumption~\ref{assume} can be replaced with the weaker assumption that $\interior\Lambda(\R^d)$ is a nonempty set, which is already satisfied thanks to the assumption that $\Lambda$ is a non-constant function. In that case, by shifting $\Lambda$ by a constant, one can easily obtain an aggregation function that satisfies Assumption~\ref{assume}.\end{rem}

Finally, we recall the definitions of the two systemic risk measures of our interest. As in \cite{samuel, syst}, we adopt the so-called set-valued approach, namely, systemic risk is measured as the \emph{set} of all capital allocation vectors that make the system safe in the sense that the aggregate value becomes acceptable when the institutions are supplied with these capital allocations. We consider first the insensitive case, where institutions are supplied with capital allocations \emph{after} aggregation, and then consider the sensitive case, where institutions are supplied with capital allocations \emph{before} aggregation.

We start by recalling the set-valued analog of the systemic risk measure in \cite{cim}. In what follows, $2^{\R^d}$ denotes the power set of $\R^d$ including the empty set.

\begin{defn}\label{insensitive}
	\cite[Example~2.1.(i)]{syst} The insensitive systemic risk measure is the set-valued function $R^{\is}\colon L_d^\infty \to 2^{\R^d}$ defined by
	\[
	R^{\is}(X) = \cb{z\in\R^d\mid \Lambda(X)+\sum_{i=1}^d z_i \in \A}
	\]
	for every $X\in L_d^\infty$.
\end{defn}

\begin{rem}\label{rem-ins}
	Note that
	\begin{align}\label{ins-rep}
	R^{\is}(X) = \cb{z\in\R^d\mid \rho\of{\Lambda(X)+\sum_{i=1}^d z_i}\leq 0}=\cb{z\in\R^d \mid \rho^\is(X)\leq \sum_{i=1}^d z_i},
	\end{align}
	where $\rho^\is = \rho\circ\Lambda$ is the scalar systemic risk measure in \cite{cim}, see \eqref{scalarsyst}.  It follows from \eqref{ins-rep} that
	\[
	\rho^\is (X) = \inf_{z\in R^\is (X)} \sum_{i=1}^d z_i.
	\]
	Hence, $\rho^\is(X)$ and $R^\is(X)$ determine each other.
\end{rem}

As motivated in Section~\ref{intro}, a more ``sensitive" systemic risk measure can be defined by aggregating the wealths after the institutions are supplied with their capital allocations.

\begin{defn}\label{sensitive}
	\cite[Example~2.1.(ii)]{syst} The sensitive systemic risk measure is the set-valued function $R^{\sen}\colon L_d^\infty \to 2^{\R^d}$ defined by
	\[
	R^{\sen}(X) = \cb{z\in\R^d \mid \Lambda(X+z) \in \A}
	\]
	for every $X\in L_d^\infty$.
\end{defn}

\begin{rem}\label{rem-sen}
	For fixed $X\in L_d^\infty$, note that
	\begin{equation}\label{sen-rep}
	R^\sen (X) = \cb{z\in\R^d \mid \rho(\Lambda(X+z))\leq 0}=\cb{z\in\R^d\mid \rho^\is(X+z)\leq 0}.
	\end{equation}
	However, $R^\sen(X)$ cannot be recovered from $\rho^\is(X)$, in general.
\end{rem}

Let us denote by $L_d^1$ the set of all $d$-dimensional random vectors $X$ whose expectations $\E\sqb{X}\coloneqq(\E\sqb{X_1},\ldots,\E\sqb{X_d})^{\mathsf{T}}$ exist as points in $\R^d$.

\begin{defn}\label{setrmdefn}
	\cite[Definition~2.1]{hh:duality}
	For a set-valued function $R\colon L_d^\infty\to 2^{\R^d}$, consider the following properties.
	\begin{enumerate}[\bf (i)]
		\item \textbf{Monotonicity: }$X\geq Z$ implies $R(X)\supseteq R(Z)$ for every $X,Z\in L_d^\infty$.
		\item \textbf{Convexity: }It holds $R(\gamma X+(1-\gamma) Z)\supseteq \gamma R(X)+(1-\gamma)R(Z)$ for every $X,Z\in L_d^\infty$ and $\gamma\in[0,1]$.
		\item \textbf{Closedness: }The set $\cb{X\in L_d^\infty \mid z\in R(X)}$ is closed with respect to the weak$^{\ast}$ topology $\sigma(L_d^\infty, L_d^1)$ for every $z\in \R^d$.
		\item \textbf{Finiteness at zero: }It holds $R(0)\notin\cb{\emptyset,\R^d}$.
		\item \textbf{Translativity: }It holds $R(X+z)=R(X)-z$ for every $X\in L_d^\infty$ and $z\in\R^d$.
		\item \textbf{Positive homogeneity: }It holds $R(\gamma X) = \gamma R(X)\coloneqq \cb{\gamma z\mid z\in R(X)}$ for every $X\in L_d^\infty$ and $\gamma >0$.
	\end{enumerate}
\end{defn}

\begin{prop}\label{properties}
	\text{} 
	\begin{enumerate}
		\item $R^{\is}$ is a set-valued convex risk measure that is non-translative in general: it satisfies properties (i)-(iv) above.
		\item $R^{\sen}$ is a set-valued convex risk measure: it satisfies all of properties (i)-(v) above.
	\end{enumerate}
\end{prop}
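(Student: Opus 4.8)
The plan is to reduce every assertion to three properties of the scalar functional $\rho^{\is}=\rho\circ\Lambda\colon L_d^\infty\to\R$, and then read off the set-valued statements from the representations $R^{\is}(X)=\cb{z\in\R^d\mid \rho^{\is}(X)\le\sum_{i=1}^d z_i}$ in \eqref{ins-rep} and $R^{\sen}(X)=\cb{z\in\R^d\mid \rho^{\is}(X+z)\le 0}$ in \eqref{sen-rep}. First I would establish: (a) $\rho^{\is}$ is decreasing, since $X\ge Z$ forces $\Lambda(X)\ge\Lambda(Z)$ by monotonicity of $\Lambda$ and hence $\rho^{\is}(X)\le\rho^{\is}(Z)$ by monotonicity of $\rho$; (b) $\rho^{\is}$ is convex, since concavity of $\Lambda$ followed by monotonicity and then convexity of $\rho$ gives $\rho^{\is}(\gamma X+(1-\gamma)Z)\le\rho\of{\gamma\Lambda(X)+(1-\gamma)\Lambda(Z)}\le\gamma\rho^{\is}(X)+(1-\gamma)\rho^{\is}(Z)$; and (c) $\rho^{\is}$ has the Fatou property, since a finite concave function on $\R^d$ is continuous, so $X_n\to X$ almost surely with $\sup_n\norm{X_n}_\infty\le r$ implies $\Lambda(X_n)\to\Lambda(X)$ almost surely with $\sup_n\norm{\Lambda(X_n)}_\infty\le\sup_{\norm{x}_\infty\le r}\abs{\Lambda(x)}<\infty$, whence the Fatou property of $\rho$ yields $\rho^{\is}(X)\le\liminf_n\rho^{\is}(X_n)$.

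For $R^{\is}$: property (i) follows from (a) (if $z\in R^{\is}(Z)$ then $\sum_{i=1}^d z_i\ge\rho^{\is}(Z)\ge\rho^{\is}(X)$, so $z\in R^{\is}(X)$), and property (ii) from (b) (if $z^1\in R^{\is}(X)$ and $z^2\in R^{\is}(Z)$, then $\rho^{\is}(\gamma X+(1-\gamma)Z)\le\gamma\sum_{i=1}^d z^1_i+(1-\gamma)\sum_{i=1}^d z^2_i$, so $\gamma z^1+(1-\gamma)z^2\in R^{\is}(\gamma X+(1-\gamma)Z)$). For property (iii), fix $z\in\R^d$; then $\cb{X\in L_d^\infty\mid z\in R^{\is}(X)}=\cb{X\mid \rho^{\is}(X)\le\sum_{i=1}^d z_i}$ is a sublevel set of $\rho^{\is}$, hence convex by (b) and closed under bounded almost sure convergence by (c); by the Krein--Smulian theorem it therefore suffices to check weak$^\ast$ closedness on norm-bounded sets, where for a convex set it follows from closedness under bounded almost sure convergence by a Mazur-type passage to convex combinations. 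Property (iv) holds because $R^{\is}(0)=\cb{z\in\R^d\mid \sum_{i=1}^d z_i\ge\rho^{\is}(0)}$ is a proper nonempty half-space, $\rho^{\is}(0)=\rho(\Lambda(0))$ being a real number. Finally, $R^{\is}$ need not be translative: for the admissible aggregation function $\Lambda(x)=2\sum_{i=1}^d x_i$ one has $\rho^{\is}(X+z)=\rho^{\is}(X)-2\sum_{i=1}^d z_i$, which differs from $\rho^{\is}(X)-\sum_{i=1}^d z_i$ whenever $\sum_{i=1}^d z_i\ne 0$, so $R^{\is}(X+z)\ne R^{\is}(X)-z$.

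For $R^{\sen}$: properties (i), (ii), (iii) follow exactly as for $R^{\is}$ but with $\rho^{\is}$ evaluated at the translated argument $X+z$; for instance, convexity uses $\gamma(X+z^1)+(1-\gamma)(Z+z^2)=(\gamma X+(1-\gamma)Z)+(\gamma z^1+(1-\gamma)z^2)$ together with (b), and for closedness one notes that $\cb{X\mid z\in R^{\sen}(X)}=\cb{X\mid \rho^{\is}(X+z)\le 0}$ is the translate by $-z$ of a sublevel set of $\rho^{\is}$ and applies the same Krein--Smulian argument. Property (v) is immediate from the definition: substituting $w\mapsto w-z$, $R^{\sen}(X+z)=\cb{w\mid \Lambda(X+z+w)\in\A}=R^{\sen}(X)-z$. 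The only place Assumption~\ref{assume} is used is property (iv): since $\Lambda(z)$ is deterministic, translativity of $\rho$ gives $R^{\sen}(0)=\cb{z\in\R^d\mid \Lambda(z)\ge\rho(0)}$, and $\rho(0)\in\interior\Lambda(\R^d)$ provides points $x',x''\in\R^d$ with $\Lambda(x')>\rho(0)>\Lambda(x'')$, so that $R^{\sen}(0)\notin\cb{\emptyset,\R^d}$.

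The order-theoretic and translation computations are routine bookkeeping; I expect the only genuinely delicate step to be the weak$^\ast$ closedness in property (iii), where one must both transport the Fatou property of $\rho$ to $\rho^{\is}$ through continuity of the finite concave function $\Lambda$ and invoke the Krein--Smulian theorem to upgrade closedness under bounded almost sure convergence to weak$^\ast$ closedness — the latter implication being false for convex sets in general without the reduction to norm-bounded subsets.
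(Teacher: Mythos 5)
Your proof is correct and follows essentially the same route as the paper: deduce monotonicity and convexity of $\rho^{\is}=\rho\circ\Lambda$ from the corresponding properties of $\Lambda$ and $\rho$, transport the Fatou property through the continuity of the finite concave function $\Lambda$, and reduce weak$^\ast$ closedness of sublevel sets to $L^1$-closedness on norm-bounded sets — the paper packages your Krein--Smulian/Mazur step as a single citation of \cite[Lemma~A.65]{fs:sf}, which is exactly that reduction. Two small added touches on your side: you extract the Fatou property of $\rho^{\is}$ as a separate statement (c), which cleanly serves both $R^{\is}$ and the translated sublevel sets for $R^{\sen}$, and you supply a concrete witness ($\Lambda(x)=2\sum_i x_i$, giving $\rho^{\is}(X+z)=\rho^{\is}(X)-2\sum_i z_i$) for the failure of translativity of $R^{\is}$, whereas the paper merely asserts ``non-translative in general'' without exhibiting an example.
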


The proof of Proposition~\ref{properties} is given in Section~\ref{proof2}.

\begin{rem}\label{rem-sen2}
	Let $X\in L_d^\infty$. An immediate consequence of Proposition~\ref{properties} is that $R^{\sen}(X)$ is a closed convex subset of $\R^d$ satisfying $R^{\sen}(X)=R^{\sen}(X)+\R^d_+$. Hence,  we may write $R^{\sen}(X)$ as the intersection of its supporting halfspaces
	\[
	R^{\sen}(X)=\bigcap_{w\in\R^d_+\sm\cb{0}}\cb{z\in\R^d\mid w^{\mathsf{T}}z\geq \rho^\sen_w(X)},
	\]
	where
	\begin{equation*}
	\rho^\sen_w(X) \coloneqq \inf_{z\in R^\sen(X)}w^{\mathsf{T}}z = \inf_{z\in\R^d}\cb{w^{\mathsf{T}}z\mid \Lambda(X+z)\in\A},
	\end{equation*}
	for each $w\in\R^d_+\sm\cb{0}$. In other words, $\rho^{\sen}_w$ is the \emph{scalarization} of the set-valued function $R^\sen$ in direction $w\in\R^d_+\sm\cb{0}$ and is a scalar measure of systemic risk, see \citet[Definition~3.3]{syst}. The family $(\rho^\sen_w(X))_{w\in\R^d_+\sm\cb{0}}$ determines $R^{\sen}(X)$; compare Remark~\ref{rem-ins} and Remark~\ref{rem-sen}. If one chooses $w=(1,\ldots,1)^{\mathsf{T}}\in\R^d$, then one obtains the risk measure given in \eqref{ex_sens}.
\end{rem}

We conclude this section with sufficient conditions that guarantee the positive homogeneity of the systemic risk measures; see (vi) of Definition~\ref{setrmdefn}.

\begin{prop}\label{coherence}
	Suppose that $\Lambda$ and $\rho$ are positively homogeneous, that is, $\Lambda(\gamma  x)=\gamma \Lambda(x)$ and $\rho(\gamma  X)=\gamma \rho(X)$ for every $x\in \R^d, X\in L_d^\infty, \lambda>0$. Then, $R^\is$ and $R^\sen$ are positively homogeneous.
\end{prop}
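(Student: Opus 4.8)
The plan is to observe first that positive homogeneity of $\rho$ forces the acceptance set $\A$ to be a cone, and then to read off positive homogeneity of $R^\is$ and $R^\sen$ directly from their definitions via the change of variables $z\mapsto z/\gamma$.

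First I would record the elementary fact that if $\rho(\gamma Y)=\gamma\rho(Y)$ for all $Y\in L^\infty$ and $\gamma>0$, then $\gamma\A=\A$ for every $\gamma>0$: indeed $Y\in\A$ means $\rho(Y)\le 0$, which is equivalent to $\gamma\rho(Y)=\rho(\gamma Y)\le 0$, i.e.\ $\gamma Y\in\A$, and $\gamma>0$ is arbitrary. I would also note that the pointwise identity $\Lambda(\gamma x)=\gamma\Lambda(x)$ for $x\in\R^d$, $\gamma>0$, lifts to the $\Pr$-almost sure identity $\Lambda(\gamma X)=\gamma\Lambda(X)$ for every $X\in L_d^\infty$.

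For $R^\is$, fix $X\in L_d^\infty$ and $\gamma>0$. Then $z\in R^\is(\gamma X)$ iff $\Lambda(\gamma X)+\sum_{i=1}^d z_i\in\A$, iff $\gamma\Lambda(X)+\sum_{i=1}^d z_i\in\A$ by homogeneity of $\Lambda$, iff $\gamma\of{\Lambda(X)+\sum_{i=1}^d z_i/\gamma}\in\A$, iff $\Lambda(X)+\sum_{i=1}^d (z_i/\gamma)\in\A$ by the cone property of $\A$, iff $z/\gamma\in R^\is(X)$, iff $z\in\gamma R^\is(X)$. The argument for $R^\sen$ is identical with $X+z$ replacing $X$ and the sum dropped: $z\in R^\sen(\gamma X)$ iff $\Lambda(\gamma(X+z/\gamma))=\gamma\Lambda(X+z/\gamma)\in\A$, iff $\Lambda(X+z/\gamma)\in\A$ by the cone property, iff $z/\gamma\in R^\sen(X)$, iff $z\in\gamma R^\sen(X)$.

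I do not anticipate a genuine obstacle here; the only points needing minor care are that one must keep $\gamma>0$ strict (both the cone property of $\A$ and the division by $\gamma$ require it) and the harmless lift of positive homogeneity of $\Lambda$ from $\R^d$ to the composition $\Lambda(X)$. One could alternatively avoid $\A$ altogether and argue on the scalarizations, using translativity together with $\rho(\gamma\Lambda(X))=\gamma\rho(\Lambda(X))$, but routing through the cone property of $\A$ is the cleanest.
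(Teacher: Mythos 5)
Your proof is correct and takes essentially the same approach as the paper: both rest on the algebraic rewriting $\gamma\Lambda(X)+\textstyle\sum_i z_i=\gamma\bigl(\Lambda(X)+\sum_i z_i/\gamma\bigr)$ followed by positive homogeneity. You route through the cone property of $\A$ whereas the paper applies positive homogeneity of $\rho$ directly inside the inequality $\rho(\cdot)\leq 0$, but these are interchangeable reformulations of the same step.
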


The proof of Proposition~\ref{coherence} is given in Section~\ref{proof2}.

\section{Dual representations}\label{main}

The main results of this paper provide dual representations for the insensitive ($R^{\is}$) and sensitive ($R^{\sen}$) systemic risk measures and their scalarizations $\rho^{\is}$ and $\rho^{\sen}$.

The dual representations are formulated in terms of probability measures and (weight) vectors in $\R^d$. Given two finite measures $\mu_1,\mu_2$ on $(\Omega,\F)$, we write $\mu_1\ll\mu_2$ if $\mu_1$ is absolutely continuous with respect to $\mu_2$. We denote by $\M(\Pr)$ the set of all probability measures $\Q$ on $(\Omega,\F)$ such that $\Q\ll\Pr$. In addition, we denote by $\M_d(\Pr)$ the set of all \emph{vector probability measures} $\Q=\of{\Q_1, \ldots, \Q_d}^{\mathsf{T}}$ whose components are in $\M(\Pr)$. Let $\mathbf{1}$ be the vector in $\R^d$ whose components are all equal to $1$.

Let us denote by $g$ the Legendre-Fenchel conjugate of the convex function $x\mapsto-\Lambda(-x)$, that is,
\begin{equation}\label{conjugate}
g(z) = \sup_{x\in\R^d}\of{\Lambda(x)-z^{\mathsf{T}}x }
\end{equation}
for each $z\in\R^d$. A direct consequence of the monotonicity of $\Lambda$ is that $g(z)=+\infty$ for every $z\notin\R^d_+$, hence we will only consider the values of $g$ for $z\in\R^d_+$.

In addition, since $\rho\colon L^\infty\to\R$ is a convex monetary risk measure satisfying the Fatou property, it has the dual representation
\[
\rho(Y) = \sup_{\S\in\M(\Pr)}\of{\E^{\S}\sqb{-Y}-\alpha(\S)}
\]
for every $Y\in L^{\infty}$, where $\alpha$ is the (minimal) penalty function of $\rho$ defined by
\begin{equation}\label{penalty}
\alpha(\S)\coloneqq \sup_{Y\in\A}\E^{\S}\sqb{-Y} = \sup_{Y\in L^{\infty}}\of{\E^{\S}\sqb{-Y}-\rho(Y)}
\end{equation}
for $\S\in \M(\Pr)$; see \citet[Theorem~4.33]{fs:sf}, for instance.

For two vectors $x,z\in\R^d$, their Hadamard product is defined by
\[
x\cdot z \coloneqq \of{x_1 z_1, \ldots, x_d z_d}^{\mathsf{T}}.
\]

\begin{defn}\label{syspen}
	The function $\a^{\sys}\colon\M_d(\Pr)\times\of{\R^d_+\sm\cb{0}}\to\R\cup\cb{+\infty}$ defined by
	\begin{equation*}
	\alpha^{\sys}(\Q,w)\coloneqq \inf_{\substack{ \S\in\M(\Pr)\colon\\ \forall i\colon w_i\Q_i\ll\S}}\of{\a(\S)+\E^{\S}\sqb{g\of{w\cdot \frac{d\Q}{d\S}}}}
	\end{equation*}
	for every $\Q\in\M_d(\Pr)$, $w\in\R^d_+\sm\cb{0}$ is called the systemic penalty function.
\end{defn}

In the above definition, for every $i\in\cb{1,\ldots,d}$, the condition $w_i\Q_i\ll\S$ becomes trivial when $w_i=0$ and is equivalent to $\Q_i\ll\S$ when $w_i>0$; hence, $w_i\Q_i\ll\S$ can be replaced with the condition
\[
w_i>0\;\Rightarrow\; \Q_i\ll\S
\]
equivalently. We make the convention that $w_i\frac{d\Q_i}{d\S}=0$ when $w_i=0$ although $\Q_i\ll\S$ is not required in this case. On the other hand, $g(z)\geq \Lambda(0)$ for every $z\in\R^d$, by \eqref{conjugate}. Hence, $g$ is bounded from below since $\Lambda$ is a real-valued function. These make the quantity $\E^\S\sqb{g\of{w\cdot\frac{d\Q}{d\S}}}$ in Definition~\ref{syspen} well-defined.

The following theorem summarizes the main results of the paper. Its proof is given in Section~\ref{proof}, the appendix.

\begin{thm}\label{mainthm}
	The insensitive and sensitive systemic risk measures admit the following dual representations.
	\begin{enumerate}
		\item For every $X\in L_d^\infty$,
		\begin{align*}
		R^{\is}(X)&=\bigcap_{\Q\in \M_d(\Pr), w\in\R^d_+\sm\cb{0}}\cb{z\in \R^d \mid \mathbf{1}^{\mathsf{T}}z\geq w^{\mathsf{T}}\E^{\Q}\sqb{-X}-\a^{\sys}(\Q,w)}\\
		&=\cb{z\in \R^d \mid \mathbf{1}^{\mathsf{T}}z\geq \sup_{\Q\in \M_d(\Pr), w\in\R^d_+\sm\cb{0}}\of{w^{\mathsf{T}}\E^{\Q}\sqb{-X}-\a^{\sys}(\Q,w)}}.
		\end{align*}
		\item For every $X\in L_d^\infty$,
		\begin{align*}
		R^{\sen}(X)&=\bigcap_{\Q\in \M_d(\Pr), w\in\R^d_+\sm\cb{0}}\cb{z\in \R^d \mid w^{\mathsf{T}}z\geq w^{\mathsf{T}}\E^{\Q}\sqb{-X}-\a^{\sys}(\Q,w)}\\
		&=\bigcap_{\Q\in \M_d(\Pr), w\in\R^d_+\sm\cb{0}} \Big(\E^{\Q}\sqb{-X}+\cb{z\in \R^d \mid w^{\mathsf{T}}z\geq -\a^{\sys}(\Q,w)}\Big).
		\end{align*}
	\end{enumerate}
\end{thm}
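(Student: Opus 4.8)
The plan is to derive both dual representations from scalarizations and the convex-duality machinery already set up in the excerpt. For the insensitive case, recall from Remark~\ref{rem-ins} that $R^\is(X)=\{z\in\R^d\mid \mathbf{1}^{\mathsf{T}}z\geq\rho^\is(X)\}$, so the whole statement in part (1) reduces to proving the single scalar identity
\[
\rho^\is(X)=\rho(\Lambda(X))=\sup_{\Q\in\M_d(\Pr),\,w\in\R^d_+\sm\{0\}}\of{w^{\mathsf{T}}\E^{\Q}[-X]-\a^\sys(\Q,w)}.
\]
First I would expand $\rho(\Lambda(X))=\sup_{\S\in\M(\Pr)}\of{\E^\S[-\Lambda(X)]-\a(\S)}$ using the Fatou dual representation of $\rho$, and then handle the inner term $\E^\S[-\Lambda(X)]$ by writing $-\Lambda(X)=\sup$ over the Fenchel dual of $\Lambda$: from \eqref{conjugate}, $\Lambda(x)=\inf_{z\in\R^d_+}\of{g(z)+z^{\mathsf{T}}x}$ (biconjugation, valid since $x\mapsto-\Lambda(-x)$ is proper closed convex), so pointwise $-\Lambda(X(\omega))=\sup_{z\in\R^d_+}\of{-g(z)-z^{\mathsf{T}}X(\omega)}$. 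The key step is to move the supremum outside the expectation: $\E^\S[-\Lambda(X)]=\sup_{\xi}\E^\S\of{-g(\xi)-\xi^{\mathsf{T}}X}$ where the sup runs over $\R^d_+$-valued (bounded, measurable) random vectors $\xi$; this is a standard interchange argument (monotone/dominated convergence plus a measurable-selection or "localization" argument, cf.\ the Rockafellar interchange theorem for integral functionals). Then I reparametrize $\xi=w\cdot\frac{d\Q}{d\S}$ with $w\in\R^d_+\sm\{0\}$ and $\Q\in\M_d(\Pr)$ — noting $\E^\S[\xi_i]$ finite lets one normalize — so that $\E^\S[\xi^{\mathsf{T}}X]=w^{\mathsf{T}}\E^\Q[X]$ and $\E^\S[g(\xi)]$ becomes the entropic-type term in Definition~\ref{syspen}. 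Combining and taking $\inf$ over $\S\in\M^e(\Pr)$ inside (rather than $\M(\Pr)$; one must argue the restriction to equivalent measures does not change the value, e.g.\ by an approximation $\S_\varepsilon=(1-\varepsilon)\S+\varepsilon\Pr$ and lower semicontinuity) yields exactly $\sup_{\Q,w}\of{w^{\mathsf{T}}\E^\Q[-X]-\a^\sys(\Q,w)}$.

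For the sensitive case, part (2) follows by the same engine applied to the scalarizations $\rho^\sen_w$ introduced in Remark~\ref{rem-sen2}: since $R^\sen(X)=\bigcap_{w\in\R^d_+\sm\{0\}}\{z\mid w^{\mathsf{T}}z\geq\rho^\sen_w(X)\}$, it suffices to show, for each fixed $w$,
\[
\rho^\sen_w(X)=\sup_{\Q\in\M_d(\Pr)}\of{w^{\mathsf{T}}\E^\Q[-X]-\a^\sys(\Q,w)}.
\]
Here $\rho^\sen_w(X)=\inf_{z\in\R^d}\{w^{\mathsf{T}}z\mid \Lambda(X+z)\in\A\}=\inf_{z}\{w^{\mathsf{T}}z\mid \rho(\Lambda(X+z))\leq 0\}$. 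I would recognize this as the value of a convex minimization problem in $z$ and dualize it (Lagrangian or Fenchel duality): introduce a multiplier for the constraint $\rho(\Lambda(X+z))\leq 0$, interchange with the dual representations of $\rho$ and $\Lambda$ as above, and identify the resulting dual objective. Alternatively — and perhaps cleaner — one observes the substitution: $z\in R^\sen(X)$ iff $\rho^\is(X+z)\leq 0$ iff, by part (1), $\mathbf 1^{\mathsf T}(X+z)$-type bound, more precisely iff for all $\Q,w$ one has $\mathbf 1^{\mathsf{T}}z\geq w^{\mathsf{T}}\E^\Q[-X-z]-\a^\sys(\Q,w)$; rearranging the $z$-terms, $w^{\mathsf{T}}z+\mathbf 1^{\mathsf{T}}z\,(\text{absorbed})$... — but since the translation acts on $X$ inside a nonlinear $\Lambda$, the direct substitution trick does not literally reduce (2) to (1), which is exactly why Remark~\ref{rem-sen} warns $R^\sen$ is not recoverable from $\rho^\is$. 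So I expect to run the duality argument for $\rho^\sen_w$ from scratch. The second displayed form in (2) is then a trivial rewriting, pulling $\E^\Q[-X]$ out of the halfspace description.

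The main obstacle will be the rigorous justification of interchanging the supremum (over the dual parametrization of $\Lambda$) with the expectation $\E^\S[\cdot]$, and simultaneously the outer $\sup_\S$ with the resulting $\inf$ — i.e.\ establishing that no duality gap opens up and that the optimal measurable selection $\xi(\omega)$ attaining (or approximating) $-\Lambda(X(\omega))$ can be taken of the required form $w\cdot\frac{d\Q}{d\S}$ with $\Q$ absolutely continuous and the penalty term finite on a suitably rich set. One must also be careful that $\a^\sys(\Q,w)$ can be $+\infty$, that the relevant $g\of{w\cdot\frac{d\Q}{d\S}}$ is integrable under $\S$ for the measures that matter, and that Assumption~\ref{assume} ($\rho(0)\in\interior\Lambda(\R^d)$) is what guarantees a Slater-type condition making the sensitive Lagrangian duality exact and the scalarizations $\rho^\sen_w$ finite. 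Once the scalar identities are in place, assembling the set-valued statements from the halfspace intersections (using Proposition~\ref{properties} and Remark~\ref{rem-sen2} to know $R^\is$, $R^\sen$ are closed convex and upward, hence equal to the intersection of their supporting halfspaces) is routine.
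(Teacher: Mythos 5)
Your overall scheme is correct in spirit and does reproduce the theorem's statement, but the route you take through the duality machinery is genuinely different from the paper's, and the place where you flag an obstacle is exactly where the paper does something you have not reproduced. The paper does \emph{not} biconjugate $\rho$ and $\Lambda$ separately and then interchange sup with $\E^{\S}[\cdot]$; instead it conjugates the \emph{composition} $\rho\circ\Lambda$ in one step, invoking a Moreau--Rockafellar-type theorem for composed convex functions (cited as Bo\c{t}--Grad--Wanka) to obtain a candidate conjugate $f_\lambda$, and then shows in Lemma~\ref{conj} that $(\lambda\rho\circ\Lambda)^\ast$ is the \emph{closure} of $f_\lambda$, so that the Fenchel--Moreau biconjugate of $\rho\circ\Lambda$ (which is what $\rho\circ\Lambda$ equals, by lower semicontinuity proved in Proposition~\ref{properties}) can be computed from $f_\lambda$ without taking its closure. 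This closure device is precisely what absorbs the two technical issues that your two-step version has to confront head-on: (a) the restriction of the outer measure $\S$ to $\M^e(\Pr)$, which in the paper falls out because $h_V^\ast(U)=+\infty$ whenever $\Pr\{V=0\}>0$; and (b) the fact that the infimum defining $f_\lambda$ (and hence $\a^{\sys}$) need not be attained or lower semicontinuous, which the paper never needs because only affine minorants matter. In your version, after the Rockafellar interchange you land on $\sup_{\S\in\M(\Pr)}\sup_\xi(\cdots)$; reparametrizing $\xi=w\cdot\tfrac{d\Q}{d\S}$ restricts $\Q$ to $\M_d(\S)$, not $\M_d(\Pr)$, and after swapping sup's you obtain an infimum over $\{\S\in\M(\Pr)\mid \Q\ll\S\}$ rather than over $\M^e(\Pr)$. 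Showing these two infima produce the same outer supremum is exactly the duality-gap question you raise, and the approximation $\S_\varepsilon=(1-\varepsilon)\S+\varepsilon\Pr$ ``plus lower semicontinuity'' points the wrong way: lower semicontinuity would control $\liminf$ from \emph{below}, whereas you need the penalty term $\a(\S_\varepsilon)+\E^{\S_\varepsilon}\bigl[g\bigl(w\cdot\tfrac{d\Q}{d\S_\varepsilon}\bigr)\bigr]$ to converge (or at least stay close) from \emph{above}, and $g$ may be $+\infty$-valued off a small set, so this can genuinely fail for fixed $(\Q,w)$. The paper never needs this; it is absorbed by the closure argument. A smaller gap: you never exclude $w=0$ (equivalently $U=0$) from the supremum, whereas the paper devotes a nontrivial paragraph (the argument around equation~\eqref{nozero}, using the non-constancy of $\Lambda$) to showing the supremum is unchanged upon removing $U=0$ from $-L^1_{d,+}$; this matters because $\a^{\sys}$ is only defined for $w\neq 0$.

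For part 2, your plan (scalarize $R^{\sen}$ over directions, apply Lagrangian duality to the constraint $\rho(\Lambda(X+z))\leq 0$ using Assumption~\ref{assume} as a Slater condition, then identify the dual) is essentially what the paper does: it scalarizes over $u\in\Delta^{d-1}$, introduces the Lagrangian $\rho^{\sen}_{u,\lambda}$, computes its conjugate, and again feeds it through Lemma~\ref{conj}. You are right that the ``substitution trick'' from $\rho^{\is}$ does not work here; good that you discarded it. The wrapping-up steps (halfspace intersections via Proposition~\ref{properties}, Remark~\ref{rem-sen2}) are fine. The bottom line: your decomposition is a legitimate alternative road, and it does expose the economics of the penalty function more directly, but to make it rigorous you would need either to reprove the relevant special case of the composition-conjugation theorem or to supply the missing closure/approximation argument; as written, that step is a gap, and the $w=0$ exclusion is an omission.
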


Let us comment on the economic interpretation of the above dual representations. Consider a financial network with nodes $1,\ldots,d$ denoting the institutions and society (or an external entity) is added to this network as node $0$. The dual representations can be regarded as the conservative computations of the capital allocations of the institutions in the presence of \emph{model uncertainty} and \emph{weight ambiguity} according to the following procedure.
\begin{itemize}
	\item Society is assigned a probability measure $\S$, which has the associated penalty $\a(\S)$. 
	\item Each institution $i$ is assigned a probability measure $\Q_i$ and a relative weight $w_i$ with respect to society.
	\item The distance of the network of institutions to society is computed by the \emph{multivariate $g$-divergence} of $(\Q_1, \ldots, \Q_d)$ with respect to $\S$ as follows. Each density $\frac{d\Q_i}{d\S}$ is multiplied by its associated relative weight $w_i$, and the weighted densities are used as the input of the divergence function $g$. The resulting multivariate $g$-divergence is
	\[
	\E^{\S}\sqb{g\of{w_1\frac{d\Q_1}{d\S},\ldots,w_d\frac{d\Q_d}{d\S}}},
	\]
	which can be seen as a weighted sum distance of the vector probability measure $\Q$ to the probability measure $\S$ of society. In particular, when the aggregation function $\Lambda$ is in a certain exponential form (Section~\ref{entropic} below), the multivariate $g$-divergence is a weighted sum of relative entropies with respect to $\S$.
	\item The systemic penalty function $\a^{\sys}$ is computed as the minimized sum of the multivariate $g$-divergence and the penalty function $\a$ over all possible choices of the probability measure $\S$ of society, see Definition~\ref{syspen}. This is the total penalty incurred for choosing $\Q$ and $w$ as a probabilistic model of the financial system.
	\item \textbf{Insensitive case: }To compute $R^{\is}(X)$, one computes the worst case weighted negative expectation of the wealth vector $X$ penalized by the systemic penalty function over all possible choices of the uncertain model $\Q\in\M_d(\Pr)$ and the ambigious weight vector $w\in\R^d_+\sm\cb{0}$:
	\[
	\rho^{\is}(X)=\sup_{\Q\in \M_d(\Pr), w\in\R^d_+\sm\cb{0}}\of{w^{\mathsf{T}}\E^{\Q}\sqb{-X}-\a^{\sys}(\Q,w)}.
	\]
	This quantity serves as the minimal total endowment needed for the network of institutions: every capital allocation vector $z\in\R^d$ whose sum of entries exceeds $\rho^{\is}(X)$ is considered as a feasible compensator of systemic risk, and hence, it is included in the set $R^{\is}(X)$. In particular, $R^{\is}(X)$ is a halfspace with direction vector $\mathbf{1}$.
	
	\item \textbf{Sensitive case: }To compute $R^{\sen}(X)$, one computes the negative expectation of the wealth vector $X$ penalized by the systemic penalty function. This quantity serves as a threshold for the \emph{weighted} total endowment of the institutions: a capital allocation vector $z\in\R^d$ is considered feasible with respect to the model $\Q\in\M_d(\Pr)$ and weight vector $w\in\R^d_+\sm\cb{0}$ if its\emph{ weighted sum} exceeds its corresponding threshold, that is, if
	\[
	w^{\mathsf{T}}z\geq w^{\mathsf{T}}\E^{\Q}\sqb{-X}-\a^{\sys}(\Q,w).
	\]
	Finally, a capital allocation vector $z\in\R^d$ is considered as a feasible compensator of systemic risk if it is feasible with respect to \emph{all} possible choices of the model $\Q$ and weight vector $w$.
\end{itemize}

\begin{rem}\label{nomodeluncertainty}
	Let us consider the special case where the risk measure $\rho$ for the aggregate values is $\rho(Y)=\E\sqb{-Y}$ for every $Y\in L^\infty$. In this case, we have $\a(\S)=0$ if and only if $\S=\Pr$, and $\a(\S)=+\infty$ otherwise. In view of the above economic interpretations, this choice of $\rho$ corresponds precisely to the case where there is no uncertainty about the probability measure of \emph{society}. In particular, the systemic penalty function reduces simply to the multivariate $g$-divergence with respect to the true probability measure $\Pr$, that is,
	\[
	\a^\sys(\Q,w)=\E\sqb{g\of{w\cdot\frac{d\Q}{d\Pr}}}
	\]
	for every $\Q\in\M_d(\Pr)$ and $w\in\R^d_+\sm\cb{0}$. Nevertheless, the model uncertainty (as well as the weight ambiguity) associated to the \emph{banks} remains in the picture since one has still to calculate the intersections over different choices of $(\Q,w)$ in Theorem~\ref{mainthm}. This observation can be seen as a justification of the interpretation that the aggregation function $\Lambda$ is a society-related quantity: as $\rho$ is used to test whether $\Lambda(X)$ is acceptable, a simplistic risk-neutral choice of $\rho$ eliminates only the part of model uncertainty coming from society. Similarly, in the general case where an arbitrary risk measure $\rho$ is used, the quantity $\a(\S)$ is the dual object associated to the acceptability of $\Lambda(X)$, which justifies the interpretation of $\S$ as society's probability measure.
\end{rem}

As a follow-up on Theorem~\ref{mainthm}, we state below the dual representations of the so-called \emph{scalarizations} of the insensitive and sensitive systemic risk measures. Recall from Remark~\ref{rem-ins} that
\[
\rho^\is(X) = \inf_{z\in R^\is(X)} \mathbf{1}^{\mathsf{T}}z
\]
for every $X\in L_d^\infty$, where $\rho^\is = \rho\circ \Lambda$. Hence, $\rho^\is$ is the \emph{scalarization} of the set-valued function $R^\is$ in the direction $\mathbf{1}$. From \eqref{ins-rep}, it is clear that the values of $R^\is$ are halfspaces with normal direction $\mathbf{1}$. Hence, the scalarizations of $R^\is$ in different directions yield trivial values, that is, for every $X\in L_d^\infty$,
\[
\inf_{z\in R^\is(X)}w^{\mathsf{T}}z = - \infty
\]
provided that $w\in\R^d_+\sm\cb{0}$ is not of the form $w=\lambda \mathbf{1}$ for some $\lambda>0$. On the other hand, this is not the case for $R^\sen$ as its values are not halfspaces in general. 

For the sensitive case, recall from Remark~\ref{rem-sen2} the scalarizations
\[
\rho^\sen_w(X) = \inf_{z\in R^\sen(X)}w^{\mathsf{T}}z = \inf_{z\in\R^d}\cb{w^{\mathsf{T}}z\mid \Lambda(X+z)\in\A},
\]
for $X\in L_d^\infty$ and $w\in\R^d_+\sm\cb{0}$. One such scalarization can be used as a scalar measure of systemic risk if one can fix \emph{a priori} a weight vector $w\in\R^d_+\sm\cb{0}$ which implies a ranking of the importance of the institutions. While $\rho^\sen_w$ is a monotone convex functional, it has the following form of translativity that depends on the choice of $w$: for every $X\in L_d^\infty, z\in\R^d$,
\[
\rho^\sen_w(X+z) = \rho_w^\sen(X)-w^{\mathsf{T}}z.
\]

Comparing Remark~\ref{rem-sen2} and Theorem~\ref{mainthm}, one can ask if we have equality in
\begin{equation}\label{conj2}
\rho^\sen_w(X)\stackrel{?}{=} \sup_{\Q\in\M_d(\Pr)}\of{w^{\mathsf{T}}\E^\Q\sqb{-X}-\a^\s(\Q,w)}.
\end{equation}
However, $\rho^\sen_w$ might fail to be a weak* lower semicontinuous function in general, even though $R^\sen$ is a closed set-valued function; see \citet[page~92]{setoptsurv} for a discussion about the lower semicontinuity of the scalarizations of set-valued functions. Therefore, one can only expect to have a dual representation for $\rho^\sen_w$ when it is assumed to be weak* lower semicontinuous. Furthermore, $\a^\s(\Q,\cdot)$ may fail to be positively homogeneous in general while $w\mapsto \rho^\sen_w(X)$ and $w\mapsto w^{\mathsf{T}}\E^\Q\sqb{-X}$ are positively homogeneous. For this reason, $\a^\sys(\Q,\cdot)$ should be replaced in \eqref{conj2} with a positively homogeneous alternative. In Proposition~\ref{scalarizations}, under a technical condition, we provide such a version of \eqref{conj2} in which equality is achieved.

As a preparation for Proposition~\ref{scalarizations}, we introduce some additional notation. Let us denote by $L_d^1$ the linear space of all integrable $d$-dimensional random vectors (distinguished up to almost sure equality). For $p\in\cb{1,+\infty}$, let us also define the cone $L^p_{d,+} = \cb{U\in L_d^p \mid \Pr\cb{U\geq 0}=1}$; if $d=1$, then we write $L^p = L_1^p$, $L^p_+=L^p_{1,+}$. We denote by $\rho^\ast$ the conjugate function of $\rho$ defined by
\[
\rho^\ast(V)\coloneqq \sup_{Y\in L^\infty}\of{\E\sqb{VY}-\rho(Y)}
\]
for each $V\in L^1$.

Let us consider the function $m$ on $L_d^1$ defined by
\begin{equation}\label{mbar}
m(U)\coloneqq\inf_{V\in L_+^1}\cb{\E\sqb{Vg\of{\frac{U}{V}}1_{\cb{V>0}}}+\E\sqb{V}\rho^\ast\of{\frac{-V}{\E\sqb{V}}}\mid \Pr\cb{V=0,U\neq 0}=0}
\end{equation}
for each $U\in L_{d,+}^1$, where $\E\sqb{V}\rho^\ast(\frac{-V}{\E\sqb{V}})=0$ is understood when $V\equiv 0$; and $m(U)\coloneqq+\infty$ for $U\notin L_{d,+}^1$. We denote by $\cl m$ the closure of $m$, that is, $\cl m$ is the unique function on $L_d^1$ whose epigraph is the closure of the epigraph of $m$; see Section~\ref{proof} for the definition of epigraph. The function $m$ is an essential element of Proposition~\ref{scalarizations} and it gives rise to the systemic penalty function under additional assumptions. The role of $m$ is discussed further in the proof of Proposition~\ref{scalarizations} in Section~\ref{proof}. For the time being, we need it to state the dual representations of scalarizations.

\begin{prop}\label{scalarizations}
	The scalarizations of the insensitive and sensitive systemic risk measures admit the following dual representations.
	\begin{enumerate}[1.]
		\item For every $X\in L_d^\infty$,
		\[
		\rho^\is (X) = \sup_{\Q\in\M_d(\Pr), w\in\R^d_+\sm\cb{0}}\of{w^{\mathsf{T}}\E^\Q\sqb{-X}-\a^\s(\Q,w)}.
		\]
		\item Let $w\in\R^d_+\sm\cb{0}$ and assume that $\rho^\sen_w$ is weak* lower semicontinuous. Then, for every $X\in L_d^\infty$,
		\begin{equation}\label{mrep}
		\rho_w^\sen (X) = \sup_{\Q\in\M_d(\Pr)}\of{w^{\mathsf{T}}\E^\Q\sqb{-X}-(\cl m)\of{w\cdot\frac{d\Q}{d\Pr}}}.
		\end{equation}
		Moreover, if $m$ is lower semicontinuous, then
		\begin{equation}\label{scsimple}
		\rho_w^\sen (X) = \sup_{\Q\in\M_d(\Pr)}\of{w^{\mathsf{T}}\E^\Q\sqb{-X}-\tilde{\a}^\s(\Q,w)},
		\end{equation}
		where
		$\tilde{\a}^\s(\Q,\cdot)$ is the positively homogeneous function generated by $\a^\s(\Q,\cdot)$ (see \citet[Chapter~5]{rockafellar}), namely,
		\begin{equation}\label{atilde}
		\tilde{\a}^\s(\Q,w)\coloneqq \inf_{\lambda>0}\frac{\a^\s(\Q,\lambda w)}{{\lambda}}.
		\end{equation}
		In particular, if there exist $\hat{X}\in L_d^\infty$ and a (weak*) neighborhood $A$ of $\Lambda(\hat{X})$ such that $A\subseteq\A$, then $m$ is lower semicontinuous and thus \eqref{scsimple} holds.
	\end{enumerate}
	Consequently, $\rho^\is$ and $\rho^\sen_{\mathbf{1}}$ do not coincide, in general.
\end{prop}

The second part of Proposition~\ref{scalarizations} gives rise to an alternative dual representation for $R^\sen$ under the stated assumptions, which is given in the following corollary.

\begin{cor}\label{altdual}
	For every $w\in\R^d_+\sm\cb{0}$, suppose that $\rho_w^\sen$ is a weak* lower semicontinuous function. In addition, assume that $m$ is lower semicontinuous. Then, for every $X\in L_d^\infty$,
	\[
	R^\sen(X)=\bigcap_{\Q\in\M_d(\Pr),w\in\R^d_+\sm\cb{0}}\cb{z\in\R^d\mid w^\mathsf{T}z\geq w^{\mathsf{T}}\E^\Q\sqb{-X}-\tilde{\a}^\sys(\Q,w)},
	\]
	where $\tilde{a}^\sys$ is defined as in \eqref{atilde}.
\end{cor}
\begin{proof}
	The result is an immediate consequence of Proposition~\ref{scalarizations} and Remark~\ref{rem-sen2}.
\end{proof}
\begin{rem}\label{relativeweight}
	Corollary~\ref{altdual} can be used to justify the interpretation of the dual variable $w\in\R^d_+$ as a vector of \emph{relative} weights. It can be assumed that the \emph{absolute} weight of society is $w_0=1$ and the weights $w_1,\ldots,w_d$ of the institutions are relative to this value of $w_0$. In an alternative formulation, one can work with absolute weights $\bar{w}_0>0,\bar{w}\in\R^d_+\sm\cb{0}$ for both the institutions and society. Then, it follows from \eqref{atilde} that
	\begin{equation}\label{abs}
	\tilde{\a}^\sys(\Q,\bar{w})=\inf_{\substack{\bar{w}_0>0,\\ \S\in\M(\Pr)\colon\\ \forall i\colon w_i\Q_i\ll\S }}\of{\bar{\a}\of{\bar{w}_0\frac{d\S}{d\Pr}}+\bar{w}_0\E^\S\sqb{g\of{\frac{\bar{w}}{\bar{w}_0}\cdot\frac{d\Q}{d\S}}}},
	\end{equation}
	where $\bar{\a}\of{\bar{w}_0\frac{d\S}{d\Pr}}\coloneqq\sup_{Y\in\A}\bar{w}_0\E^\S[-Y]$ extends the definition of $\a$ in \eqref{penalty} for the finite measure $\bar{w}_0\S$. In this formulation, for each $i\in\cb{1,\ldots,d}$, the fraction $\frac{\bar{w_i}}{\bar{w_0}}$ is the relative weight of institution $i$ with respect to society. Theorem~\ref{mainthm} suggests that the sensitive systemic risk measure $R^\sen$ is scale-free in the sense that only relative weights matter for the calculation of $R^\sen$. Hence, it is enough to consider the case $\bar{w}_0=w_0=1$ and write down the dual representation in terms of the relative weight vector $\frac{\bar{w}}{\bar{w}_0}=w$.
	These observations are also in line with the fact that the systemic penalty function $\a^\sys$ is not positively homogeneous in the relative weight variable: $\a^\sys(\Q,\lambda w)$ and $\lambda \a^\sys(\Q, w)$ do not coincide, in general ($\lambda>0$). On the other hand, the expression in the infimum in \eqref{abs} is positively homogeneous as a function of the absolute weight vector $(\bar{w}_0,\bar{w_1},\ldots,\bar{w}_d)\in\R^{d+1}$.
\end{rem}

We end this section with the dual representation of the systemic risk measures when they are guaranteed to be positively homogeneous by virtue of Proposition~\ref{coherence}.

\begin{cor}\label{coherence-rep}
	Suppose that $\Lambda$ and $\rho$ are positively homogeneous. Then, there exists a nonempty closed convex set $\mathcal{Z}\subseteq \R_+^d$ such that
	\[
	g(z)=\begin{cases}
	0 & \text{ if }z\in\mathcal{Z},\\ +\infty &\text{ else.}\end{cases}
	\]
	Besides, there exists a convex set $\mathcal{S}\subseteq \M(\Pr)$ of probability measures such that
	\[
	\a(\S) = \begin{cases}0 &\text{ if }\S\in\mathcal{S},\\ +\infty & \text{ else}.\end{cases}
	\]
	Let
	\[
	\mathcal{D}\coloneqq\cb{(\Q,w)\in \M_d(\Pr)\times\R^d_+\sm\cb{0} \mid \exists\S\in\mathcal{S}\colon\of{\Pr\cb{w\cdot \frac{d\Q}{d\S}\in\mathcal{Z}}=1\ \wedge \ \forall i\colon w_i\Q_i\ll \S}}.
	\]
	Then, the insensitive and sensitive systemic risk measures admit the following dual representations.
	\begin{enumerate}[1.]
		\item For every $X\in L_d^\infty$,
		\[
		R^\is(X) = \bigcap_{(\Q,w)\in \mathcal{D}} \cb{z\in\R^d \mid \mathbf{1}^{\mathsf{T}}z\geq w^{\mathsf{T}}\E^{\Q}\sqb{-X}}
		\]
		and
		\[
		\rho^\is (X) = \sup_{(\Q,w)\in \mathcal{D}}w^{\mathsf{T}}\E^\Q\sqb{-X}.
		\]
		
		\item For every $X\in L_d^\infty$,
		\begin{align*}
		R^\sen(X) &= \bigcap_{(\Q,w)\in \mathcal{D}} \cb{z\in\R^d \mid w^{\mathsf{T}}z\geq w^{\mathsf{T}}\E^{\Q}\sqb{-X}}\\
		&=\bigcap_{(\Q,w)\in \mathcal{D}} \Big(\E^{\Q}\sqb{-X}+\cb{z\in\R^d \mid w^{\mathsf{T}}z\geq 0}\Big)
		\end{align*}
		where
		\[
		\mathcal{Q}^w\coloneqq \cb{\Q\in\M_d(\Pr)\mid (\Q,\lambda w)\in\mathcal{D}\text{ for some }\lambda> 0}.
		\]
	\end{enumerate}
\end{cor}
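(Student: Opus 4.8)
The plan is to derive Corollary~\ref{coherence-rep} entirely from Theorem~\ref{mainthm} and Corollary~\ref{scalarizations}, once one observes that under positive homogeneity the systemic penalty function $\a^\s$ degenerates to the $\cb{0,+\infty}$-valued indicator of the set $\mathcal{D}$. First I would record the two structural facts. Since $\Lambda$ is finite and concave on $\R^d$, the function $x\mapsto-\Lambda(-x)$ is finite convex, hence continuous and lower semicontinuous, and positive homogeneity of $\Lambda$ makes it sublinear; therefore its conjugate $g$ is the indicator function of a nonempty closed convex set $\mathcal{Z}$ (the subdifferential of $x\mapsto-\Lambda(-x)$ at $0$), and $\mathcal{Z}\subseteq\R^d_+$ by the observation following \eqref{conjugate} that $g=+\infty$ off $\R^d_+$. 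For $\rho$, positive homogeneity forces $\rho(0)=0$, so $0\in\A$ and $\A$ is a convex cone; hence $\a(\S)=\sup_{Y\in\A}\E^\S\sqb{-Y}$, being the supremum of a linear functional over a cone through the origin, takes only the values $0$ and $+\infty$, and $\mathcal{S}\coloneqq\cb{\S\in\M(\Pr)\mid\a(\S)=0}=\dom\a$ is a convex set, nonempty because $\rho$ is real-valued.

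Next I would substitute these indicator forms into Definition~\ref{syspen}. In $\a(\S)+\E^\S\sqb{g\of{w\cdot\tfrac{d\Q}{d\S}}}$ both summands are $\cb{0,+\infty}$-valued: $\a(\S)=0$ iff $\S\in\mathcal{S}$, while $\E^\S\sqb{g\of{w\cdot\tfrac{d\Q}{d\S}}}=0$ iff $w\cdot\tfrac{d\Q}{d\S}\in\mathcal{Z}$ holds $\S$-almost surely, equivalently $\Pr$-almost surely because $\S\sim\Pr$. Hence the infimum over $\S\in\M^e(\Pr)$ equals $0$ exactly when some $\S\in\M^e(\Pr)\cap\mathcal{S}$ satisfies $\Pr\cb{w\cdot\tfrac{d\Q}{d\S}\in\mathcal{Z}}=1$, i.e.\ when $(\Q,w)\in\mathcal{D}$, and equals $+\infty$ otherwise. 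This is the crux of the argument; the only delicate point is the equivalence just used, which rests on $g\geq 0$ (so that the expectation is a well-defined element of $[0,+\infty]$) and on $\S\sim\Pr$ to pass between $\S$- and $\Pr$-almost sure statements --- the latter being precisely why $\mathcal{D}$ is phrased with $\S\in\M^e(\Pr)\cap\mathcal{S}$.

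With $\a^\s(\Q,w)=0$ for $(\Q,w)\in\mathcal{D}$ and $\a^\s(\Q,w)=+\infty$ otherwise, the representations of $R^\is$, $\rho^\is$, and $R^\sen$ follow by direct substitution: in Theorem~\ref{mainthm}.1 and Corollary~\ref{scalarizations}.1 any pair $(\Q,w)\notin\mathcal{D}$ yields the vacuous constraint $\1^{\mathsf{T}}z\geq-\infty$ in the intersection (and contributes $-\infty$ to the supremum) and may be discarded, whereas pairs in $\mathcal{D}$ carry penalty $0$; likewise both displayed forms for $R^\sen(X)$ come from plugging $\a^\s=0$ into the two forms of Theorem~\ref{mainthm}.2. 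For the scalarization $\rho^\sen_w$ I would invoke Corollary~\ref{scalarizations}.2, which gives $\rho^\sen_w(X)=\sup_{\Q\in\M_d(\Pr)}\of{w^{\mathsf{T}}\E^\Q\sqb{-X}-\tilde{\a}^\s(\Q,w)}$ with $\tilde{\a}^\s(\Q,w)=\inf_{\lambda>0}\a^\s(\Q,\lambda w)/\lambda$; since $\a^\s(\Q,\lambda w)\in\cb{0,+\infty}$ for each $\lambda>0$, this positively homogeneous hull is again $\cb{0,+\infty}$-valued and equals $0$ precisely when $(\Q,\lambda w)\in\mathcal{D}$ for some $\lambda>0$, i.e.\ when $\Q\in\mathcal{Q}^w$, which delivers $\rho^\sen_w(X)=\sup_{\Q\in\mathcal{Q}^w}w^{\mathsf{T}}\E^\Q\sqb{-X}$. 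Apart from the collapse of $\a^\s$ discussed above, every step is a routine substitution, so I expect no further obstacles.
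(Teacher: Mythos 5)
Your proof is correct and follows essentially the same route as the paper: identify $g$ as the indicator of a closed convex set via the sublinear/support-function correspondence, identify $\a$ as a $\{0,+\infty\}$-valued indicator, observe that $\a^{\sys}$ and $\tilde{\a}^{\sys}$ then collapse to the indicators of $\mathcal{D}$ and $\mathcal{Q}^w$ respectively, and substitute into Theorem~\ref{mainthm} and Corollary~\ref{scalarizations}. The only cosmetic difference is that you argue the indicator structure of $\a$ directly from $\A$ being a cone rather than citing the coherent-risk-measure representation, and you make explicit the equivalence-of-measures point needed to pass from $\S$-a.s.\ to $\Pr$-a.s.\ statements; both are details the paper leaves to references or to the reader.
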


\begin{proof}
	The existence of the set $\mathcal{Z}$ is due to the following well-known facts from convex analysis; see \citet[Theorem~13.2]{rockafellar}, for instance: a positively homogeneous proper closed convex function is the support function of a nonempty closed convex set, and the conjugate of this function is the convex indicator function of the set. The existence of the set $\mathcal{S}$ is by the dual representations of coherent risk measures; see \citet[Corollary~4.37]{fs:sf}. From Definition~\ref{syspen}, it follows that $\a^\sys(\Q,w)=0$ if $(\Q,w)\in\mathcal{D}$ and $\a^\sys(\Q,w)=+\infty$ otherwise. The rest follows from Theorem~\ref{mainthm}.
\end{proof}

\section{Examples}\label{examples}

According to Theorem~\ref{mainthm}, to be able to specify the dual representation of the insensitive and sensitive systemic risk measures, one needs to compute the penalty function $\a$ of the underlying monetary risk measure $\rho$ as well as the multivariate $g$-divergence $\E^\S[g(w\cdot \frac{d\Q}{d\S})]$ for dual probability measures $\S, \Q$ and weight vector $w$. As the penalty functions of some canonical risk measures (for instance, average value-at-risk, entropic risk measure, optimized certainty equivalents) are quite well known, we focus on the computation of multivariate $g$-divergences here. In the following subsections, we consider some canonical examples of aggregation functions proposed in the systemic risk literature.

\subsection{Total profit-loss model}

One of the simplest ways to quantify the impact of the system on society is to aggregate all profits and losses in the system \cite[Example~1]{cim}. This amounts to setting 
\[
\Lambda(x) = \sum_{i=1}^d x_i.
\]
for every realized state $x\in\R^d$. In this case, it is clear from Definition~\ref{insensitive} and Definition~\ref{sensitive} that $R^\is = R^\sen$.

An elementary calculation using \eqref{conjugate} yields
\[
g(z) = \begin{cases}
0 & \text{ if }z=\mathbf{1},\\
+\infty& \text{ else},
\end{cases}
\]
for every $z\in\R^d$. Hence, given dual variables $\Q\in \M_d(\Pr)$, $\S\in \M(\Pr)$, $w\in \R^d_+\sm\cb{0}$ with $w_i\Q_i\ll\S$ for each $i\in\cb{1,\ldots,d}$, we have
\[
\E^\S\sqb{g\of{w\cdot \frac{d\Q}{d\S}}}=
\begin{cases}
0 & \text{ if }w=\mathbf{1}, \Q_i=\S\text{ for every }i\in\cb{1,\ldots,d},\\
+\infty&\text{ else}.
\end{cases}
\]
As a result, once a measure $\S$ is chosen for society, the only plausible choice of the measure $\Q_i$ of institution $i$ is $\S$, and any other choice would yield infinite $g$-divergence. Therefore,
\[
\a^\sys(\Q,w)=
\begin{cases}
\a(\S)&\text{ if }w=\mathbf{1}, \Q_1=\ldots=\Q_d=\S\text{ for some }\S \in \M(\Pr),\\
+\infty & \text{ else},
\end{cases}
\]
and one obtains
\[
R^\is (X) = R^\sen (X) = \cb{z\in\R^d\mid \mathbf{1}^{\mathsf{T}}z\geq -\inf_{\S\in\M(\Pr)}\of{\sum_{i=1}^d\E^\S\sqb{X_i}+\a(\S)}}
\]
for every $X\in L_d^\infty$.

\subsection{Total loss model}

The previous example of aggregation function can be modified so as to take into account only the losses in the system \cite[Example~2]{cim}, that is, we can define
\[
\Lambda(x) = -\sum_{i=1}^d x_i^-
\]
for every $x\in \R^d$. In this case, the insensitive and sensitive systemic risk measures no longer coincide.

The conjugate function for the total loss model is given by
\[
g(z) = \begin{cases}
0 & \text{ if }z_i\in [0,1]\text{ for every }i\in\cb{1,\ldots,d},\\
+\infty& \text{ else},
\end{cases}
\]
for every $z\in \R^d$. Hence, given $\Q\in \M_d(\Pr)$, $\S\in \M(\Pr)$, $w\in \R^d_+\sm\cb{0}$ with $w_i\Q_i\ll\S$ for each $i\in\cb{1,\ldots,d}$,
\[
\E^\S\sqb{g\of{w\cdot \frac{d\Q}{d\S}}}=
\begin{cases}
0 & \text{ if } \Pr \cb{w_i\frac{d\Q_i}{d\S}\leq 1}=1\text{ for every }i\in\cb{1,\ldots,d},\\
+\infty&\text{ else}.
\end{cases}
\]
Therefore, the systemic penalty function can be given as
\[
\a^\sys(\Q,w)=\inf_{\S\in\M(\Pr)}\cb{\a(\S)\mid  w_i\Q_i\ll\S,\ \Pr\cb{w_i\frac{d\Q_i}{d\S}\leq 1}=1\text{ for every }i\in\cb{1,\ldots,d}}.
\]

\subsection{Entropic model}\label{entropic}

As an example of a strictly concave aggregation function, let us suppose that $\Lambda$ aggregates the profits and losses through an exponential utility function \cite[Section~5.1(iii)]{syst}, namely,
\[
\Lambda(x) = -\sum_{i=1}^d e^{-x_i-1}
\]
for every $x\in \R^d$. Then, for every $z\in\R^d_+$,
\[
g(z) = \sum_{i=1}^d z_i \log(z_i),
\]
where $\log(0)\coloneqq -\infty$ and $0\log(0)\coloneqq 0$ by convention. Hence, for every $\Q\in \M_d(\Pr)$, $\S\in \M(\Pr)$, $w\in \R^d_+\sm\cb{0}$ with $w_i\Q_i\ll\S$ for each $i\in\cb{1,\ldots,d}$, the $g$-divergence is given by
\[
\E^\S\sqb{g\of{w\cdot \frac{d\Q}{d\S}}}=\sum_{i=1}^d \H\of{w_i\Q_i\Vert \S},
\]
where $\H\of{w_i\Q_i\Vert \S}$ is the \emph{relative entropy} of the finite measure $w_i\Q_i$ with respect to society's probability measure $\S$, that is,
\[
\H\of{w_{i}\Q_i\Vert \S} \coloneqq\E^\S\sqb{w_i\frac{d\Q_i}{d\S}\log\of{w_i\frac{ d\Q_i}{d\S}}}.
\]
Since $\H(w_i\Q_i\Vert \S)=w_i\H(\Q_i \Vert\S)+w_i\log(w_i)$, one can also write
\[
\E^\S\sqb{g\of{w\cdot \frac{d\Q}{d\S}}}=\sum_{i=1}^d w_i \H\of{\Q_i\Vert\S}+ \sum_{i=1}^d w_i\log(w_i).
\]
Hence, the systemic penalty function has the form
\begin{align*}
\a^\s(\Q,w) = \inf_{\substack{ \S\in\M(\Pr)\colon\\ \forall i\colon w_i\Q_i\ll\S}}\of{\a(\S)+\sum_{i=1}^d \H\of{w_i \Q_i\Vert\S}}= \inf_{\substack{ \S\in\M(\Pr)\colon\\ \forall i\colon w_i\Q_i\ll\S}}\of{\a(\S)+\sum_{i=1}^d w_i\H\of{\Q_i\Vert\S}}+c(w),
\end{align*}
where 
$c(w)\coloneqq \sum_{i=1}^d w_i\log(w_i)$.

Finally, we consider a special case where the underlying monetary risk measure $\rho$ is the entropic risk measure, that is,
\[
\rho(Y) =\log \E \sqb{e^{- Y}}
\]
for every $Y\in L^\infty$. In this case, the penalty function of $\rho$ is also a relative entropy:
\[
\a(\S) = \H\of{\S\Vert \Pr}.
\]
As a result, the systemic penalty function becomes
\[
\a^\s(\Q,w) = \inf_{\substack{ \S\in\M(\Pr)\colon\\ \forall i\colon w_i\Q_i\ll\S}} \of{ \H\of{\S\Vert\Pr}+\sum_{i=1}^d w_i \H\of{\Q_i\Vert\S}}+c(w).
\]
As relative entropy is a commonly used quantification of distance between probability measures, this form of the systemic penalty function provides a geometric insight to the economic interpretations discussed in Section~\ref{main}. Indeed, the sum $\H\of{\S\Vert\Pr}+\sum_{i=1}^d w_i \H\of{\Q_i\Vert\S}$ can be seen as the weighted sum distance of the vector probability measure $\Q$ to the physical measure $\Pr$ while passing through the probability measure $\S$ of society: as a first step, one measures the distance from each $\Q_i$ to $\S$ as $\H\of{\Q_i\Vert\S}$, and computes their weighted sum $\sum_{i=1}^d w_i \H\of{\Q_i\Vert\S}$. Then, this weighted sum is added to the distance $\H\of{\S\Vert\Pr}$ of $\S$ to $\Pr$, which gives the total distance of $\Q$ to $\Pr$ via $\S$. Finally, the systemic penalty function looks for the minimum possible distance of $\Q$ to $\Pr$ (via $\S$) over all choices of $\S\in\M(\Pr)$ with $w_i\Q_i\ll\S$ for every $i\in\cb{1,\ldots,d}$.

\subsection{Eisenberg-Noe model}\label{noccp}

The previous three examples provide general rules for aggregating the wealths of the institutions. As these rules ignore the precise structure of the financial system, they would be useful in systemic risk measurement, for instance, in the absence of detailed information about interbank liabilities.

In this subsection, we consider the network model of \cite{eisnoe}, where the financial institutions (typically banks) are modeled as the nodes of a network and the liabilities between the institutitons are represented on the arcs. As in \cite{syst}, we will add society as an additional node to the network and define the aggregation function as the net equity of society after clearing payments are realized based on the liabilities.

Let us recall the description of the model. We consider a financial network with nodes $0,1,\ldots, d$, where nodes $1,\ldots,d$ denote the institutions and node $0$ denotes society. For an arc $(i,j)$ with $i,j\in\cb{0,1\ldots,d}$, let us denote by $\ell_{ij}\geq 0$ the \emph{nominal liability} of node $i$ to node $j$. We make the following assumptions.
\begin{enumerate}[\bf (i)]
	\item Society has no liabilities, that is, $\ell_{0i}=0$ for every $i\in\cb{1,\ldots,d}$.
	\item Every institution has nonzero liability to society, that is, $\ell_{i0}>0$ for every $i\in \cb{1,\ldots,d}$.
	\item Self-liabilities are ignored, that is, $\ell_{ii}=0$ for every $i\in\cb{0, 1,\ldots,d}$.
\end{enumerate}
For an arc $(i,j)$ with $i\neq 0$, the corresponding \emph{relative liability} is defined as
\[
a_{ij}\coloneqq 
\frac{\ell_{ij}}{\bar{p}_i}, 
\]
where $\bar{p}_i \coloneqq \sum_{j=0}^{d}\ell_{ij}>0$ is the \emph{total liability} of institution $i$.

Given a realized state $x\in\R^d$, a vector $p(x)\coloneqq(p_1(x),\ldots, p_d(x))^{\mathsf{T}}\in\R^d_+$ is called a \emph{clearing payment vector} for the system if it solves the fixed point problem
\[
p_i(x) =\min\cb{\bar{p}_i, x_i + \sum_{j=1}^{d} a_{ji}p_j(x)},\quad i\in\cb{1,\ldots,d}.
\]
In this case, the payment $p_i(x)$ of an institution $i$ at clearing must be equal either to the total liability of $i$ (no default) or else to the total income of $i$ coming from other institutions as well as its realized wealth (default). Clearly, every clearing payment vector $p=p(x)$ is a feasible solution of the linear programming problem
\begin{align*}\label{P}
& \text{maximize }\; \sum_{i=1}^d a_{i0}p_i  \tag{$P(x)$}\\
& \text{subject to }\; p_{i}\leq x_i + \sum_{j=1}^d a_{ji} p_j,\quad i\in\cb{1,\ldots,d},  \\
&       \; \;\;\;\;\quad\quad \quad \; \;    p_i\in [0, \bar{p}_i],\quad i\in\cb{1,\ldots,d}. 
\end{align*}
Let us denote by $\Lambda(x)$ the optimal value of problem \eqref{P}. Note that this problem is either infeasible, in which case we set $\Lambda(x)=-\infty$, or else it has a finite optimal value $\Lambda(x)\in [0,\bar{\bar{p}}]$, where $\bar{\bar{p}}\coloneqq \sum_{i=0}^d a_{i0}\bar{p}_i$. Let us denote by $\X$ the set of all $x\in\R^d$ for which \eqref{P} is feasible. Clearly, $\R^d_+\subseteq \X$. In fact, only the case $x\in \R^d_+$ is considered by \cite{eisnoe} and it is shown in \citet[Lemma~4]{eisnoe} that every optimal solution of \eqref{P} is a clearing payment vector for the system. We note here that the same result holds for every $x\in \X$ since the objective function is strictly increasing with respect to the payment $p_i$ of each institution $i$.

Therefore, if $x\in\X$, then the optimal value $\Lambda(x)$ is the equity of society after clearing payments are realized, and if $x\notin \X$, then we have $\Lambda(x)=-\infty$ in which case there is no clearing payment vector. Hence, we set $\Lambda$ to be the aggregation function for the Eisenberg-Noe model as it quantifies the impact of the financial network on society.

It is easy to check that $\Lambda$ is increasing, concave and non-constant. Hence, it satisfies the definition of an aggregation function except that it may take the value $-\infty$. Nevertheless, by Remark~\ref{eisnoe-ext} below, we are able to apply Theorem~\ref{mainthm} to this choice of $\Lambda$. In Proposition~\ref{noccp-comp} below, we provide a simple expression for the conjugate function $g$ defined by \eqref{conjugate}.

\begin{prop}\label{noccp-comp}
	For $z\in\R^d_+$, one has
	\[
	g(z) = \sum_{i=1}^d c_i(z)^+,
	\]
	where
	\[
	c_i(z) = \sum_{j=0}^d \ell_{ij}(z_j -z_i),
	\]
	and $z_0\coloneqq 1$. Consequently, for every $\Q\in\M_d(\Pr)$, $\S\in \M(\Pr)$, $w\in \R^d_+\sm\cb{0}$ with $w_i\Q_i\ll\S$ for each $i\in\cb{1,\ldots,d}$,
	\[
	\E^\S\sqb{g\of{w\cdot \frac{d\Q}{d\S}}}=\sum_{i=1}^d \E^\S\sqb{c_i\of{w\cdot \frac{d\Q}{d\S}}^+}= \sum_{i=1}^d \E^\S\sqb{ \of{\sum_{j=0}^d \ell_{ij}\of{w_j\frac{d\Q_j}{d\S} -w_i\frac{d\Q_i}{d\S}}}^+},
	\]
	where $w_0\coloneqq 1, \Q_0\coloneqq \S$.
\end{prop}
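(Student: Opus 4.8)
The plan is to compute $g$ directly from the definition $g(z)=\sup_{x\in\R^d}\of{\Lambda(x)-z^{\mathsf{T}}x}$ by ``unfolding'' the linear program that defines $\Lambda$. Since it has already been noted that $g(z)=+\infty$ whenever $z\notin\R^d_+$, I fix $z\in\R^d_+$ throughout. Writing $\Lambda(x)$ as the maximum of $\sum_{i=1}^d a_{i0}p_i$ over the feasible set of \eqref{P} --- which is empty, with $\Lambda(x)=-\infty$, precisely when $x\notin\X$, so that such states do not affect the outer supremum --- I would merge the two suprema into a single supremum over $(x,p)$ subject to $p_i\in[0,\bar{p}_i]$ and $x_i\geq p_i-\sum_{j=1}^d a_{ji}p_j$ for $i\in\cb{1,\ldots,d}$, with objective $\sum_{i=1}^d a_{i0}p_i-\sum_{i=1}^d z_ix_i$.

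First I would eliminate $x$. Because $z\geq 0$, the term $-z_ix_i$ is largest when $x_i$ is taken as small as its constraint permits, namely $x_i=p_i-\sum_{j=1}^d a_{ji}p_j$; for coordinates with $z_i=0$ this value is still feasible and contributes $0$, so the choice is harmless. Substituting this and interchanging the order of summation in the double sum via $\sum_{i=1}^d z_i\sum_{j=1}^d a_{ji}p_j=\sum_{i=1}^d p_i\sum_{j=1}^d a_{ij}z_j$, the objective collapses to $\sum_{i=1}^d p_i\of{\sum_{j=0}^d a_{ij}z_j-z_i}$, where the $j=0$ summand reads $a_{i0}z_0=a_{i0}$ by the convention $z_0=1$.

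Next I would maximize over the box $p\in\prod_{i=1}^d[0,\bar{p}_i]$: since the objective is linear and separable in $p$, the maximum is attained coordinatewise and equals $\sum_{i=1}^d\bar{p}_i\of{\sum_{j=0}^d a_{ij}z_j-z_i}^+$. Finally, plugging in $a_{ij}=\ell_{ij}/\bar{p}_i$ and using $\bar{p}_i=\sum_{j=0}^d\ell_{ij}>0$ gives $\bar{p}_i\of{\sum_{j=0}^d a_{ij}z_j-z_i}=\sum_{j=0}^d\ell_{ij}z_j-\bar{p}_iz_i=\sum_{j=0}^d\ell_{ij}(z_j-z_i)=c_i(z)$, and since $\bar{p}_i>0$ the factor may be absorbed inside the positive part, so $g(z)=\sum_{i=1}^d c_i(z)^+$. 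The second displayed identity then follows by evaluating this formula pointwise at the a.s.\ finite, nonnegative random vector $z=w\cdot\frac{d\Q}{d\S}$ --- reading the convention $z_0=1$ as $w_0\frac{d\Q_0}{d\S}=1$, i.e.\ $w_0\coloneqq 1$, $\Q_0\coloneqq\S$ --- and taking the expectation $\E^\S$; measurability and integrability are routine because $g$ is a finite continuous convex function.

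I expect no genuine obstacle here; the argument is essentially bookkeeping for a linear program. The only points that require a little care are the re-indexing in the double sum (which amounts to transposing the relative-liability matrix), the justification that infeasible states $x\notin\X$ and the coordinates with $z_i=0$ may be handled as claimed when eliminating $x$, and the elementary fact that maximizing a separable linear function over a box yields the sum of its positive parts.
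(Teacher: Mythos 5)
Your proposal is correct and follows essentially the same route as the paper's own proof: swap the suprema over $x$ and $p$, eliminate $x$ via the observation that for $z\geq 0$ the inner infimum is attained at the constraint lower bound, transpose the double sum, then maximize the resulting separable linear objective over the box $\prod_i[0,\bar p_i]$ to obtain $\sum_i \bar p_i(\cdot)^+$, and finally absorb $\bar p_i>0$ into the positive part. The only difference is that you spell out the edge cases (infeasible states $x\notin\X$ and coordinates with $z_i=0$) that the paper leaves implicit, which is a harmless and slightly more careful bookkeeping.
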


\begin{proof}
	Let $z\in\R^d_+$. We have
	\begin{align*}
	g(z) &= \sup_{x\in\R^d}\of{\Lambda(x)-z^{\mathsf{T}}x}\\
	&= \sup_{p_i\in [0, \bar{p}_i],\; i\in\cb{1,\ldots,d}}\cb{\sum_{i=1}^d a_{i0}p_i-\inf_{x\in\R^d}\cb{z^{\mathsf{T}}x\mid x_i \geq p_i-\sum_{j=1}^d a_{ji} p_j,i\in\cb{1,\ldots,d}}}\\
	&=\sup_{p_i\in [0, \bar{p}_i],\; i\in\cb{1,\ldots,d}}\cb{\sum_{i=1}^d a_{i0}p_i-\sum_{i=1}^d z_i\of{ p_i-\sum_{j=1}^d a_{ji} p_j}}\\
	&=\sup_{p_i\in [0, \bar{p}_i],\; i\in\cb{1,\ldots,d}}\sum_{i=1}^d \of{a_{i0}+\sum_{j=1}^d a_{ij}z_j -z_i}p_i\\
	&=\sum_{i=1}^d c_i(z)^+
	\end{align*}
	since
	\[
	c_i (z)  = \sum_{j=0}^d \ell_{ij}\of{z_j-z_i} = \bar{p}_i \of{a_{i0}+\sum_{j=1}^d a_{ij}z_j -z_i}.
	\]
	Hence, the last statement follows.
\end{proof}

Therefore, for the multivariate $g$-divergence of the Eisenberg-Noe model, the contribution of institution $i$ is computed as follows. The difference between the weighted density $w_j\frac{d\Q_j}{d\S}$ of institution $j$ and the weighted density $w_i\frac{d\Q_i}{d\S}$ of institution $i$ is computed and this difference is multiplied by the corresponding liability $\ell_{ij}\geq 0$. The positive part of the sum of these (weighted) differences over all $j\neq i$ is the (random) measurement of the \emph{incompatibility} of $\Q_i, w_i$ for institution $i$ given the choices of $\Q_j, w_j$ for institutions $j\neq i$ as well as the choice of $\S$ for society. Finally, the expected value of this measurement gives the contribution of institution $i$ to the $g$-divergence. 

\begin{rem}\label{eisnoe-ext}
	Note that the aggregation function $\Lambda$ in this example takes the value $-\infty$ on $\R^d\sm\X$, which is not allowed in the general framework of Section~\ref{general}. In particular, $\Lambda(X)\in L^\infty$ may no longer hold true. Nevertheless, Definition~\ref{insensitive} and Definition~\ref{sensitive} of the systemic risk measures still make sense with the usual acceptance set $\A\subseteq L^\infty$ of a monetary risk measure $\rho\colon L^\infty\to\R$. One just obtains $R^\is(X)=\emptyset$ if  $\Lambda(X)\notin L^\infty$. Equivalently, one can extend $\rho$ to random variables of the form $\tilde{Z}=Z1_{F}-\infty 1_{\Omega\;\sm\; F}$ with $Z\in L^\infty$ and $F\in\F$ ($1_F$ denotes the stochastic indicator function of $F$) by
	\[
	\rho(\tilde{Z})=\begin{cases}
	\rho(Z) &\text{if }\Pr(F)=1,\\
	+\infty &\text{if }\Pr(F)<1,
	\end{cases}
	\]
	and then define $R^\is$ and $R^\sen$ by \eqref{ins-rep} and \eqref{sen-rep}. Naturally, this extended definition yields $\rho^\is(X)=\rho(\Lambda(X))=+\infty$ and $R^\is(X)=\emptyset$ if $\Pr\cb{\Lambda(X)\in\R}<1$. In other words, the insensitive systemic risk measure provides no capital allocation vectors in this case. However, with the sensitive systemic risk measure $R^\sen$, it is always possible to find a nonempty set of capital allocation vectors. Indeed, it is easy to check that, for every $X\in L_d^\infty$, the vector $\bar{z}\in\R^d$ defined by $z_i=\norm{X_i^-}_{\infty}$ for each $i\in\cb{1,\ldots,d}$ yields $\Lambda(X+\bar{z})\in L^\infty$ (as $X+\bar{z}\geq  0$), and moreover, one can find $z\in\R^d$ with $\Lambda(X+\bar{z}+z)\in\A$ so that $\bar{z}+z\in R^\sen(X)$. Finally, with the extended definition, $R^\sen$ still has the dual representation in Theorem~\ref{mainthm} with minor and obvious changes in the proof in Section~\ref{proof} and the dual representation of $R^\is$ in Theorem~\ref{mainthm} holds for $X$ with $\Lambda(X)\in L^\infty$, else $R^\is(X)=\emptyset$.
\end{rem}

\subsection{Eisenberg-Noe model with central clearing}\label{withccp}

When a central clearing counterparty (CCP) is introduced to the financial system, all liabilities between the institutions are realized through the CCP, which results in a star-shaped structure in the modified network. On the other hand, the institutions still have their liabilities to society. In this subsection, we consider the modified Eisenberg-Noe model with the CCP and society and show that the $g$-divergence in this model can be written in a similar way as in the model without the CCP.

Let us consider again the Eisenberg-Noe model without the CCP where the liabilities $\ell_{ij}$, $i,j\in\cb{0,1,\ldots,d}$, satisfy the three assumptions of the previous subsection. We add the CCP to the network as node $d+1$ and compute the liabilities between the CCP and institution $i\in\cb{1,\ldots,d}$ by
\[
\ell_{i (d+1)} \coloneqq \of{\sum_{j=1}^d \ell_{ij}-\sum_{j=1}^d \ell_{ji}}^+,\quad \ell_{(d+1)i} \coloneqq \of{\sum_{j=1}^d \ell_{ij}-\sum_{j=1}^d \ell_{ji}}^-.
\]
In other words, if the \emph{net} interbank liability of institution $i$ is positive in the original network, then this amount is set as the liability of institution $i$ to the CCP; otherwise, the absolute value of this amount is set as liability of the CCP to institution $i$. Once the liabilities of/to the CCP are set, the liabilities on the arcs $(i,j)$ with $i,j\in\cb{1,\ldots,d}$ are all set to zero but the liability $\ell_{i0}>0$ of institution $i$ to society remains the same.

In the modified network, a given realized state $x$ has $d+1$ components, that is, $x=(x_1,\ldots,x_{d+1})^{\mathsf{T}}$, and the defining fixed point problem of a clearing payment vector $p(x)=(p_1(x),\ldots,p_{d+1}(x))^{\mathsf{T}} \in\R^{d+1}_+$ can be written as
\begin{align}
& p_i(x) = \min\cb{\ell_{i(d+1)}+\ell_{i0},x_i + p_{d+1}(x)\frac{\ell_{(d+1)i}}{\sum_{j=1}^d\ell_{(d+1)j}}},\quad i\in\cb{1,\ldots,d}, \label{firstpart} \\
& p_{d+1}(x) = \min\cb{\sum_{i=1}^d \ell_{(d+1)i}, x_{d+1}+\sum_{i=1}^d p_i(x)\frac{\ell_{i(d+1)}}{\ell_{i(d+1)}+\ell_{i0}}}. \label{secondpart}
\end{align}
The corresponding linear programming problem becomes
\begin{align*}\label{tP}
&\text{maximize }\; \sum_{i=1}^d \frac{\ell_{i0}}{\ell_{i0}+\ell_{i(d+1)}}p_i  \tag{$\tilde{P}(x)$}\\
&\text{subject to }\; p_{i}\leq x_i + \frac{\ell_{(d+1)i}}{\sum_{j=1}^d\ell_{(d+1)j}}p_{d+1},\quad i\in\cb{1,\ldots,d},  \\
& \; \;\;\;\;\quad\quad \quad \; \;  p_{d+1}\leq x_{d+1}+\sum_{i=1}^d \frac{\ell_{i(d+1)}}{\ell_{i(d+1)}+\ell_{i0}}p_i,\\
&      \; \;\;\;\;\quad\quad \quad \; \;    p_i\in [0, \ell_{i(d+1)}+\ell_{i0}],\quad i\in\cb{1,\ldots,d},\\
&      \; \;\;\;\;\quad\quad \quad \; \; p_{d+1}\in \sqb{0,\; \sum_{i=1}^d \ell_{(d+1)i}}.
\end{align*}
Let us denote by $\tilde{\Lambda}(x)$ the optimal value of problem \eqref{tP} and by $\tilde{\X}$ the set of all $x\in\R^{d+1}$ for which \eqref{tP} is feasible. As in the original network, if $x\notin\tilde{\X}$, then we have $\tilde{\Lambda}(x)=-\infty$ and there exists no clearing payment vectors. On the other hand, if $x\in\tilde{\X}$, then \eqref{tP} has a finite optimal value $\tilde{\Lambda}(x)$. However, as the objective function does not depend on the payment $p_{d+1}$ of the CCP, an optimal solution of \eqref{tP} may fail to be a clearing payment vector. In particular, \citet[Lemma~4]{eisnoe} does not apply here. Nevertheless, any clearing payment vector is a solution of \eqref{tP}, and we will show in Proposition~\ref{CCPresult} that, for feasible \eqref{tP}, one can always find an optimal solution that is also a clearing payment vector.

\begin{prop}\label{CCPresult}
	Suppose $x\in \tilde{\X}$. Then \eqref{tP} has an optimal solution $p(x)\in\R^{d+1}_+$ that is also a clearing payment vector. Moreover, the optimal value $\tilde{\Lambda}(x)$ equals the equity of society after clearing payments are realized under any such solution of \eqref{tP}.
\end{prop}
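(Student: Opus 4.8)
The plan is to obtain the required clearing payment vector by starting from \emph{any} optimal solution of \eqref{tP} and then perturbing it without changing its objective value. For $x\in\tilde{\X}$ the feasible set of \eqref{tP} is nonempty (by the definition of $\tilde{\X}$) and it is closed and bounded, being contained in the box $\prod_{i=1}^d[0,\ell_{i(d+1)}+\ell_{i0}]\times\sqb{0,\sum_{i=1}^d\ell_{(d+1)i}}$; since the objective is linear, an optimal solution $p=(p_1,\ldots,p_d,p_{d+1})$ exists, with optimal value $\tilde{\Lambda}(x)$.

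The core of the argument is a two-stage ``pump up''. First I would re-solve the CCP balance: set $\hat{p}_{d+1}\coloneqq\min\cb{\sum_{i=1}^d\ell_{(d+1)i},\;x_{d+1}+\sum_{i=1}^d\frac{\ell_{i(d+1)}}{\ell_{i(d+1)}+\ell_{i0}}p_i}$, that is, the right-hand side of \eqref{secondpart} evaluated at $p$. Feasibility of $p$ gives $\hat{p}_{d+1}\geq p_{d+1}$, and since raising $p_{d+1}$ only relaxes the constraints $p_i\leq x_i+\frac{\ell_{(d+1)i}}{\sum_j\ell_{(d+1)j}}p_{d+1}$ while not entering the objective, the vector $(p_1,\ldots,p_d,\hat{p}_{d+1})$ is feasible and still optimal. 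Next I would re-solve the institution equations: set $\hat{p}_i\coloneqq\min\cb{\ell_{i(d+1)}+\ell_{i0},\;x_i+\frac{\ell_{(d+1)i}}{\sum_j\ell_{(d+1)j}}\hat{p}_{d+1}}$, the right-hand side of \eqref{firstpart} at $\hat{p}_{d+1}$. Feasibility again yields $\hat{p}_i\geq p_i$ for every $i$, so $(\hat{p}_1,\ldots,\hat{p}_d,\hat{p}_{d+1})$ is feasible with objective $\sum_{i=1}^d\frac{\ell_{i0}}{\ell_{i0}+\ell_{i(d+1)}}\hat{p}_i\geq\tilde{\Lambda}(x)$. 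Because every objective coefficient is strictly positive --- this is where $\ell_{i0}>0$ is used --- and $\tilde{\Lambda}(x)$ is the optimal value, this forces $\hat{p}_i=p_i$ for all $i$. Hence $p(x)\coloneqq(p_1,\ldots,p_d,\hat{p}_{d+1})$ simultaneously satisfies \eqref{firstpart} (by the definition of $\hat{p}_i$ together with $\hat{p}_i=p_i$) and \eqref{secondpart} (by the definition of $\hat{p}_{d+1}$), so it is a clearing payment vector, and it is optimal for \eqref{tP} since it has the same objective value as $p$.

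For the ``moreover'' part, observe that in the modified network each institution $i$ has only two creditors, the CCP and society, with nominal liabilities in the ratio $\ell_{i(d+1)}:\ell_{i0}$; hence under any clearing payment vector the amount society receives from $i$ is the pro-rata share $\frac{\ell_{i0}}{\ell_{i(d+1)}+\ell_{i0}}p_i$. Since society (node $0$) has no liabilities, its post-clearing equity equals $\sum_{i=1}^d\frac{\ell_{i0}}{\ell_{i(d+1)}+\ell_{i0}}p_i$, which is exactly the objective of \eqref{tP} at $p$; when $p$ is an optimal solution this equals $\tilde{\Lambda}(x)$.

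I expect the main obstacle to be precisely the feature that makes \cite[Lemma~4]{eisnoe} inapplicable: the objective of \eqref{tP} is blind to $p_{d+1}$, so a generic optimum need not obey the CCP balance \eqref{secondpart}, and the remedy is the two-stage re-solution above, in which the optimality of the starting point is what pins the institution payments back and thereby closes the coupled fixed-point system. Two routine points should be dispatched along the way: that the pumped-up vectors stay inside the box (automatic, since each new coordinate is a minimum that includes its own upper bound, and $\hat{p}_{d+1}\geq p_{d+1}\geq0$, $\hat{p}_i\geq p_i\geq0$), and the degenerate case $\sum_{j=1}^d\ell_{(d+1)j}=0$ --- equivalently, all net interbank positions vanish --- in which the CCP carries no liabilities, the coefficients $\frac{\ell_{(d+1)i}}{\sum_j\ell_{(d+1)j}}$ are read as $0$, \eqref{tP} decouples from $p_{d+1}$, and the claim is immediate.
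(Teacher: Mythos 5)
Your proof is correct and follows essentially the same strategy as the paper's: start from an arbitrary optimal solution of \eqref{tP}, raise the CCP payment to re-balance \eqref{secondpart}, and then exploit optimality together with the strict positivity of the objective coefficients $\ell_{i0}/(\ell_{i0}+\ell_{i(d+1)})$ to pin down the institution payments so that \eqref{firstpart} holds. Your version is in fact slightly more careful: you define the adjusted CCP payment as a genuine minimum, so it automatically respects the cap $\sum_{i=1}^d\ell_{(d+1)i}$, whereas the paper sets $p_{d+1}(x)$ directly to the right-hand side of \eqref{secondpart} without taking that minimum.
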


\begin{proof}
	
	
	Let $p\in\R^{d+1}_+$ be an optimal solution of \eqref{tP}, which exists as \eqref{tP} is a feasible bounded linear programming problem by supposition. Let us define $p(x)\in\R^{d+1}_+$ by
	\begin{align*}
	&p_i(x)\coloneqq p_i,\quad i\in\cb{1,\ldots,d},\\
	&p_{d+1}(x)\coloneqq x_{d+1}+\sum_{i=1}^d \frac{\ell_{i(d+1)}}{\ell_{i(d+1)}+\ell_{i0}}p_i.
	\end{align*}
	Note that $p_{d+1}(x)\geq p_{d+1}\geq 0$. On the other hand, $p(x)$ satisfies the first part of the fixed point problem, namely, the system of equations in \eqref{firstpart}. This is due to the fact that the objective function has a strictly positive coefficient for $p_i(x)$ for each $i\in\cb{1,\ldots,d}$ and the conclusion can be checked in the same way as in the proof of \citet[Lemma~4]{eisnoe}. Hence, it is clear from \eqref{firstpart}, \eqref{secondpart} that $p(x)$ is a clearing payment vector. Therefore, $p(x)$ is also a feasible solution of \eqref{tP}. Finally, the objective function values of $p(x)$ and $p$ coincide. Therefore, $p(x)$ is an optimal solution of \eqref{tP}. The second statement follows from the optimality of $p(x)$.
\end{proof}

With Proposition~\ref{CCPresult}, the computations of the conjugate function $\tilde{g}$ and the corresponding multivariate $\tilde{g}$-divergence function can be seen as a special case of the computations in the original model in Section~\ref{noccp}.

\begin{cor}
	For every $z\in\R^{d+1}_+$,
	\[
	\tilde{g}(z) = \sum_{i=1}^{d}\sqb{\ell_{i0}(1 - z_i)+\ell_{i(d+1)}(z_{d+1}-z_i)}^+ + \of{\sum_{i=1}^d \ell_{(d+1)i}(z_i - z_{d+1})}^+.
	\]
	Consequently, for every $\Q\in\M_{d+1}(\Pr)$, $\S\in \M(\Pr)$, $w\in \R^{d+1}_+\sm\cb{0}$ with $w_i\Q_i\ll\S$ for each $i\in\cb{1,\ldots,d+1}$,
	\begin{align*}
	\E^\S\sqb{\tilde{g}\of{w\cdot \frac{d\Q}{d\S}}} &= \sum_{i=1}^{d}\E\sqb{\ell_{i0}\of{1 - w_i\frac{d\Q_i}{d\S}}+\ell_{i(d+1)}\of{w_{d+1}\frac{\Q_{d+1}}{d\S}-w_i\frac{d\Q_i}{d\S}}}^+ \\
	&\quad + \E\sqb{\sum_{i=1}^d \ell_{(d+1)i}\of{w_i\frac{d\Q_i}{d\S} - w_{d+1}\frac{d\Q_{d+1}}{d\S}}}^+
	\end{align*}
\end{cor}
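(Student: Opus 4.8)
The plan is to follow the route of the proof of Proposition~\ref{noccp-comp}, now applied to the parametric linear program \eqref{tP}. By definition $\tilde\Lambda(x)$ is the optimal value of \eqref{tP} (with $\tilde\Lambda(x)=-\infty$ when $x\notin\tilde\X$), and Proposition~\ref{CCPresult} ensures this value is legitimately the equity of society at clearing; for the conjugate computation only the linear-programming description is needed. Starting from \eqref{conjugate}, the $-\infty$ values do not contribute to the supremum, so $\tilde g(z)=\sup_{x\in\tilde\X}\of{\tilde\Lambda(x)-z^{\mathsf{T}}x}$. Since $x\in\tilde\X$ exactly when the feasible region of \eqref{tP} is nonempty, I would rewrite $\tilde g(z)$ as a single supremum over pairs $(x,p)$, where $p=(p_1,\ldots,p_{d+1})^{\mathsf{T}}$ ranges over the box $B\coloneqq\prod_{i=1}^d[0,\ell_{i0}+\ell_{i(d+1)}]\times\sqb{0,\sum_{i=1}^d\ell_{(d+1)i}}$ and $x$ satisfies the two families of inequality constraints of \eqref{tP}, with objective $\sum_{i=1}^d\frac{\ell_{i0}}{\ell_{i0}+\ell_{i(d+1)}}p_i-z^{\mathsf{T}}x$.

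Next I would eliminate $x$. Since $z\in\R^{d+1}_+$, for fixed $p$ the infimum of $z^{\mathsf{T}}x$ over the admissible $x$ is attained at the componentwise lower bounds dictated by the constraints, namely $x_i=p_i-\frac{\ell_{(d+1)i}}{\sum_{j=1}^d\ell_{(d+1)j}}p_{d+1}$ for $i\leq d$ and $x_{d+1}=p_{d+1}-\sum_{i=1}^d\frac{\ell_{i(d+1)}}{\ell_{i(d+1)}+\ell_{i0}}p_i$. Substituting these back, $\tilde g(z)$ becomes the supremum over $p\in B$ of a linear function of $p$ that is \emph{separable} across the coordinates $p_1,\ldots,p_{d+1}$, the coupling between the $p_i$ and $p_{d+1}$ having disappeared. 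A routine regrouping (combining fractions) shows that the coefficient of $p_i$ for $i\in\cb{1,\ldots,d}$ equals $\frac{\ell_{i0}(1-z_i)+\ell_{i(d+1)}(z_{d+1}-z_i)}{\ell_{i0}+\ell_{i(d+1)}}$, while the coefficient of $p_{d+1}$ equals $\frac{\sum_{i=1}^d\ell_{(d+1)i}(z_i-z_{d+1})}{\sum_{j=1}^d\ell_{(d+1)j}}$. Maximizing a linear term $c\,t$ over $t\in[0,M]$ gives $c^+M$; applying this coordinatewise, the denominators cancel against the interval lengths $\ell_{i0}+\ell_{i(d+1)}$ and $\sum_{j=1}^d\ell_{(d+1)j}=\sum_{i=1}^d\ell_{(d+1)i}$, and the asserted formula for $\tilde g(z)$ falls out. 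The multivariate $\tilde g$-divergence identity then follows exactly as in Proposition~\ref{noccp-comp}: substitute $z_i=w_i\frac{d\Q_i}{d\S}$ for $i\leq d$ and $z_{d+1}=w_{d+1}\frac{d\Q_{d+1}}{d\S}$ (which lies in $\R^{d+1}_+$ $\S$-almost surely) into the expression for $\tilde g$ and take $\E^\S$ of both sides.

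The argument is essentially a transcription of the one for the model without central clearing, so I expect the only delicate points to be bookkeeping: the algebraic simplification of the two coefficients must be arranged so that the interval lengths cancel exactly, and one should address the degenerate case $\sum_{j=1}^d\ell_{(d+1)j}=0$. In that case every institution's net interbank position is nonnegative, whence $\ell_{(d+1)i}=0$ (and in fact $\ell_{i(d+1)}=0$) for all $i$; reading the fractions $\frac{\ell_{(d+1)i}}{\sum_j\ell_{(d+1)j}}$ with the convention $0/0=0$ forces $p_{d+1}=0$, the last summand $\of{\sum_{i=1}^d\ell_{(d+1)i}(z_i-z_{d+1})}^+$ vanishes, and the stated formula reduces to (and agrees with) the one for the network without central clearing. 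Everything else is identical to the proof of Proposition~\ref{noccp-comp}.
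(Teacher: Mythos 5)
Your calculation is correct, and the result and intermediate algebra (the cancellation of the interval lengths against the denominators) all check out, but the paper takes a shorter path. Its proof of the corollary is a single sentence: observe that the network with central clearing \emph{is} a $(d+1)$-node Eisenberg--Noe network in the sense of Section~\ref{noccp} (institutions $1,\ldots,d$, the CCP as node $d+1$, society as node $0$), in which all arcs among $\{1,\ldots,d\}$ carry zero liability and $\ell_{(d+1)0}=0$, and then invoke Proposition~\ref{noccp-comp} with $d$ replaced by $d+1$. Reading off $\tilde c_i(z)=\ell_{i0}(1-z_i)+\ell_{i(d+1)}(z_{d+1}-z_i)$ for $i\le d$ and $\tilde c_{d+1}(z)=\sum_{j=1}^d\ell_{(d+1)j}(z_j-z_{d+1})$ from the general formula gives the stated expression immediately. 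What you do instead is replay the LP-duality computation of Proposition~\ref{noccp-comp} directly on the program $\tilde P(x)$: eliminate $x$ via the componentwise lower bounds, regroup to get separable coefficients, and maximize over the box. That is the same mathematics unfolded rather than cited, so the two routes agree in substance. The one thing your version buys is a bit of robustness: the paper's citation implicitly uses that the proof of Proposition~\ref{noccp-comp} only needs $\bar p_i>0$ (not the standing assumption $\ell_{i0}>0$, which fails for the CCP node since $\ell_{(d+1)0}=0$), and your explicit treatment of the degenerate case $\sum_j\ell_{(d+1)j}=0$ addresses a corner the paper leaves tacit.
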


\begin{proof}
	This is a special case of Proposition~\ref{noccp-comp} for a network with $d+1$ nodes and society.
\end{proof}

\subsection{Resource allocation model}\label{resource}

The resource allocation problem is a classical operations research problem where the aim is to allocate $d$ limited resources for $m$ different tasks so as to maximize the profit made from these tasks. In the systemic risk context, this problem is discussed in \cite{cim} as well.

To be precise, let us fix the problem data $p\in\R_+^m, A\in\R_+^{d\times m}$ where $p_j$ denotes the unit profit made from task $j$ and $A_{ij}$ denotes the utilization rate of resource $i$ by task $j$, for each $i\in\cb{1,\ldots,d}, j\in\cb{1,\ldots,m}$. We also denote by $u\in\R^m$ an allocation vector where $u_j$ quantifies the production in task $j\in\cb{1,\ldots,m}$.  In addition, the realized state of the system is a vector $x\in\R^d$ where $x_i$ denotes the capacity of resource $i\in\cb{1,\ldots,d}$. Then, the aggregation function is defined as the profit made from allocating the capacities optimally for the tasks, namely, $\Lambda(x)$ is the optimal value of the following linear programming problem.
\begin{align*}
&\text{maximize }\; p^{\mathsf{T}}u  \\
&\text{subject to }\; Au\leq x, \\
& \; \;\;\;\;\quad\quad \quad \; \; \; u\geq 0.
\end{align*}
As in Remark~\ref{eisnoe-ext} of the Eisenberg-Noe model, it can be argued that the infeasible case $\Lambda(x)=-\infty$ creates no problems for the application of the general duality result Theorem~\ref{mainthm}. The following proposition provides the special form of the multivariate $g$-divergence and the systemic penalty function.

\begin{prop}\label{resource-prop}
	For every $z\in\R^d_+$,
	\[
	g(z) = \begin{cases}
	0 & \text{ if }A^{\mathsf{T}}z\geq p,\\
	+\infty & \text{ else}.
	\end{cases}
	\]
	Consequently, for every $\Q\in\M_d(\Pr)$, $\S\in \M(\Pr)$, $w\in \R^d_+\sm\cb{0}$ with $w_i\Q_i\ll\S$ for each $i\in\cb{1,\ldots,d}$,
	\[
	\E^\S\sqb{g\of{w\cdot\frac{d\Q}{d\S}}}=\begin{cases}
	0 & \text{ if }\Pr\cb{A^{\mathsf{T}}\of{w\cdot\frac{d\Q}{d\S}}\geq p}=1,\\
	+\infty& \text{ else},
	\end{cases}
	\]
	and
	\[
	\a^\s(\Q,w) = \inf_{\S\in\M(\Pr)}\cb{\a(\S)\mid \Pr\cb{A^{\mathsf{T}}\of{w\cdot\frac{d\Q}{d\S}}\geq p}=1,\ w_i\Q_i\ll\S\text{ for every }i\in\cb{1,\ldots,d}}.
	\]
\end{prop}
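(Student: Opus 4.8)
The plan is to compute the conjugate $g$ directly from its definition \eqref{conjugate} via linear-programming duality, and then read off the two corollaries ($g$-divergence and systemic penalty function) by substitution. First I would write
\[
g(z) = \sup_{x\in\R^d}\of{\Lambda(x) - z^{\mathsf{T}}x} = \sup_{x\in\R^d}\sup_{u\geq 0,\; Au\leq x}\of{p^{\mathsf{T}}u - z^{\mathsf{T}}x}.
\]
For $z\in\R^d_+$, inside the double supremum the term $-z^{\mathsf{T}}x$ is decreasing in each coordinate of $x$, so for a fixed feasible $u$ the optimal choice is $x = Au$ (the smallest $x$ compatible with the constraint $Au\leq x$; any larger $x$ only decreases the objective since $z\geq 0$). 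This collapses the inner problem and yields
\[
g(z) = \sup_{u\geq 0}\of{p^{\mathsf{T}}u - z^{\mathsf{T}}Au} = \sup_{u\geq 0}\of{p - A^{\mathsf{T}}z}^{\mathsf{T}}u.
\]
This is the supremum of a linear functional over the nonnegative orthant: it equals $0$ when $p - A^{\mathsf{T}}z \leq 0$, i.e. $A^{\mathsf{T}}z \geq p$, and $+\infty$ otherwise (pick any coordinate $j$ with $(p - A^{\mathsf{T}}z)_j > 0$ and send $u_j\to\infty$). That establishes the formula for $g$.

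For the second claim, substitute the random vector $w\cdot\tfrac{d\Q}{d\S}$ into $g$ pointwise. Since $g$ takes only the values $0$ and $+\infty$, and $\S$ is equivalent to $\Pr$, the expectation $\E^\S[g(w\cdot\tfrac{d\Q}{d\S})]$ is $0$ precisely when $g(w\cdot\tfrac{d\Q}{d\S}) = 0$ holds $\S$-a.s.\ (equivalently $\Pr$-a.s.), i.e.\ when $A^{\mathsf{T}}(w\cdot\tfrac{d\Q}{d\S}) \geq p$ almost surely, and is $+\infty$ otherwise; here one uses that $w\cdot\tfrac{d\Q}{d\S}\in\R^d_+$ a.s.\ so that the formula for $g$ applies. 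Plugging this dichotomy into Definition~\ref{syspen},
\[
\a^\s(\Q,w) = \inf_{\S\in\M^e(\Pr)}\of{\a(\S) + \E^\S\sqb{g\of{w\cdot\tfrac{d\Q}{d\S}}}},
\]
the inner term is $\a(\S)$ when the inequality constraint holds and $+\infty$ otherwise, so the infimum is over exactly those $\S\in\M^e(\Pr)$ satisfying $A^{\mathsf{T}}(w\cdot\tfrac{d\Q}{d\S})\geq p$, giving the stated expression.

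There is essentially no serious obstacle here; the only points requiring a little care are (i) justifying that the substitution $x = Au$ is optimal, which hinges on $z\geq 0$ and is exactly where the monotonicity-induced restriction $g(z)=+\infty$ for $z\notin\R^d_+$ (noted after \eqref{conjugate}) is used, and (ii) being careful that the almost-sure inequality $A^{\mathsf{T}}(w\cdot\tfrac{d\Q}{d\S})\geq p$ is the right translation of "$g=0$ on a full-measure set" — this is immediate because $g^{-1}(\{0\})=\{z\in\R^d_+: A^{\mathsf{T}}z\geq p\}$ and a nonnegative $\S$-integrable random variable has zero expectation iff it vanishes $\S$-a.s. The infeasible case $\Lambda(x)=-\infty$ does not affect the computation of $g$ (those $x$ contribute nothing to the supremum) and, as remarked for the Eisenberg--Noe model, does not impede the application of Theorem~\ref{mainthm}.
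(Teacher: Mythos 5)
Your proposal is correct and follows essentially the same route as the paper: rewrite $g(z)$ by swapping the two suprema, observe that for $z\in\R^d_+$ the inner infimum over $x\geq Au$ is attained at $x=Au$, reduce to the support function of $\R^m_+$ in direction $p-A^{\mathsf{T}}z$, and read off the two corollaries by substitution into Definition~\ref{syspen}.
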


\begin{proof}
	Note that
	\begin{align*}
	g(z) &= \sup_{x\in \R^d} \of{\Lambda(x)-z^{\mathsf{T}}x}\\
	&= \sup_{u\in\R^m_+}\of{p^{\mathsf{T}}u-\inf_{\cb{x\in\R^d\mid x\geq Au}}z^{\mathsf{T}}x}\\
	&= \sup_{u\in\R^m_+}\of{p^{\mathsf{T}}u-z^{\mathsf{T}}Au}\\
	&= \sup_{u\in\R^m_+}(p-A^{\mathsf{T}}z)^{\mathsf{T}}u,
	\end{align*}
	which is the value of the support function of the cone $\R^m_+$ in the direction $p-A^{\mathsf{T}}z$. Hence,
	\[
	g(z) = \begin{cases}
	0 & \text{ if }A^{\mathsf{T}}z-p\in \R^m_+,\\
	+\infty & \text{ else},
	\end{cases}
	\]
	which proves the first claim. The rest follows directly from the definitions of the multivariate $g$-divergence and the systemic penalty function.
\end{proof}

In light of Proposition~\ref{resource-prop}, let us comment on the interpretation of the dual variables. To each resource $i$, we assign a probability measure $\Q_i$ and a weight $w_i$. In addition, we assign a probability measure $\S$ to the economy. Then, the weighted density $w_i\frac{d\Q_i}{d\S}$ can be seen as the unit profit made from using resource $i$. Given $\S$, we say that the choices of $\Q, w$ are compatible with $\S$ if, for each $j$, the total profit made out of a unit activity in task $j$ exceeds the original unit profit for task $j$ (with probability one), that is, if
\[
\sum_{i=1}^d A_{ij}w_i\frac{d\Q_i}{d\S}\geq p_j.
\]

\subsection{Network flow model}\label{network}



The maximum flow problem aims to maximize the total flow from a source node to a sink node in a capacitated network \citep{harris, maxflowhist}. In the systemic risk context, this problem is discussed in \cite{cim} as well.

Let us formally recall the problem. We consider a network $(\mathcal{N},\mathcal{E})$, where $\mathcal{N}$ is the set of nodes and $\mathcal{E}\subseteq \mathcal{N}\times\mathcal{N}$ is the nonempty set of arcs with $d\coloneqq\abs{\mathcal{E}}$. On this network, each arc $(a,b)$ has some capacity $x_{(a,b)}\in\R$ for carrying flow. Then, $x=(x_{(a,b)})_{(a,b)\in\mathcal{E}}\in\R^{d}$ is a realized state of this system. We are interested in maximizing the flow from a fixed source node $s\in \mathcal{N}$ to a fixed sink node $t\in \mathcal{N}\sm\cb{s}$.

In this example, we will consider the so-called \emph{path formulation} of the maximum flow problem as a linear programming problem. To that end, let us recall that a (simple) path $p$ is a finite sequence of arcs where no node is visited more than once. Let $P$ be the set of all paths starting from $s$ and ending at $t$, and let $m\coloneqq\abs{P}$. For each $p\in P$, we will denote by $u_p\in\R$ a flow carried over path $p$. Then, the aggregation function is defined as the maximum total flow carried over the paths in $P$, that is,  $\Lambda(x)$ is the optimal value of the following linear programming problem.
\begin{align*}
&\text{maximize}\; \sum_{p\in P} u_p\\
&\text{subject to } \sum_{\cb{p\in P\mid(a,b)\in p}}u_p\leq x_{(a,b)},\; (a,b)\in \mathcal{E}.\\
\end{align*}

As in Remark~\ref{eisnoe-ext}, it can be argued that the infeasible case $\Lambda(x)=-\infty$ creates no problems for the application of the general duality results. The following proposition provides the special form of the multivariate $g$-divergence and the systemic penalty function.

\begin{prop}\label{maxflow-prop}
	For every $z=(z_{(a,b)})_{(a,b)\in\mathcal{E}}\in\R^d_+$,
	\[
	g(z) =\begin{cases}
	0 & \text{ if }\sum_{(a,b)\in p}z_{(a,b)}= 1 \text{ for every }p\in P,\\
	+\infty & \text{ else}.
	\end{cases}
	\
	\]
	Consequently, for every $\Q=(\Q_{(a,b)})_{(a,b)\in\mathcal{E}}\in\M_d(\Pr)$, $\S\in\M(\Pr), w=(w_{(a,b)})_{(a,b)\in\mathcal{E}}\in\R^d_+\sm\cb{0}$ with $w_{(a,b)}\Q_{(a,b)}\ll\S$ for every $(a,b)\in\mathcal{E}$,
	\begin{align*}
	\E^{\S}\sqb{g\of{w\cdot \frac{d\Q}{d\S}}} = \begin{cases}
	0 & \text{ if }\Pr\cb{\sum_{(a,b)\in p}w_{(a,b)}\frac{d\Q_{(a,b)}}{d\S} = 1}=1 \text{ for every }p\in P,\\
	+\infty & \text{ else},
	\end{cases}
	\end{align*}
	and
	\begin{align*}
	\a^\s(\Q,w)= \inf_{\substack{\S\in \M(\Pr)\colon\\ \forall(a,b)\in\mathcal{E}\colon w_{(a,b)}\Q_{(a,b)}\ll\S}}\cb{\a(\S)\mid \Pr\cb{\sum_{(a,b)\in p}w_{(a,b)}\frac{d\Q_{(a,b)}}{d\S}= 1}=1\text{ for every }p\in P }.
	\end{align*}
\end{prop}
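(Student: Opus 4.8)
The plan is to mimic the Legendre--Fenchel computation of Proposition~\ref{noccp-comp}, now exploiting that the path formulation of the maximum flow problem is a linear program whose decision variables $u_p$ are unconstrained in sign. First I would note that $g(z)=+\infty$ for $z\notin\R^d_+$ by the monotonicity of $\Lambda$, so it suffices to treat $z\in\R^d_+$. For such $z$, introduce the path--arc incidence matrix $B\in\R^{d\times m}$ whose entry in row $(a,b)$ and column $p$ equals $1$ when $(a,b)\in p$ and $0$ otherwise; then $\Lambda(x)=\sup\{\mathbf{1}^{\mathsf{T}}u\mid u\in\R^m,\ Bu\leq x\}$, with the convention $\Lambda(x)=-\infty$ when this program is infeasible (which, as in Remark~\ref{eisnoe-ext}, is harmless).

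Next, substitute into $g(z)=\sup_{x\in\R^d}(\Lambda(x)-z^{\mathsf{T}}x)$ and merge the two suprema, since $z^{\mathsf{T}}x$ does not depend on $u$: $g(z)=\sup_{(x,u):\,Bu\leq x}(\mathbf{1}^{\mathsf{T}}u-z^{\mathsf{T}}x)=\sup_{u\in\R^m}\big(\mathbf{1}^{\mathsf{T}}u-\inf_{x\geq Bu}z^{\mathsf{T}}x\big)$. Because $z\geq 0$, the inner infimum is attained at $x=Bu$ and equals $z^{\mathsf{T}}Bu$, so $g(z)=\sup_{u\in\R^m}(\mathbf{1}-B^{\mathsf{T}}z)^{\mathsf{T}}u$. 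The $p$-th coordinate of $B^{\mathsf{T}}z$ is exactly $\sum_{(a,b)\in p}z_{(a,b)}$, and the supremum of a linear functional over all of $\R^m$ is $0$ if that functional vanishes identically and $+\infty$ otherwise; this gives precisely the claimed dichotomy for $g(z)$. This is the only step that needs genuine care: one must justify the interchange of suprema and track the $\pm\infty$ conventions, but both are routine.

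For the $g$-divergence, observe that $w\cdot\frac{d\Q}{d\S}$ is an $\R^d_+$-valued random vector (as $w\in\R^d_+$ and Radon--Nikodym densities are nonnegative), so the formula for $g$ applies pointwise: $g\big(w\cdot\frac{d\Q}{d\S}\big)$ takes only the values $0$ and $+\infty$, and equals $0$ exactly on the event $\{\sum_{(a,b)\in p}w_{(a,b)}\frac{d\Q_{(a,b)}}{d\S}=1\text{ for every }p\in P\}$. Hence $\E^{\S}[g(w\cdot\frac{d\Q}{d\S})]=0$ when this event has full probability (equivalently full $\S$-probability, since $\S\sim\Pr$) and $=+\infty$ otherwise, because a nonnegative integrand that is $+\infty$ on a set of positive measure has infinite expectation. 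Finally, inserting this into Definition~\ref{syspen}, the summand $\E^{\S}[g(w\cdot\frac{d\Q}{d\S})]$ is $0$ or $+\infty$, so the infimum over $\S\in\M^e(\Pr)$ collapses to the infimum of $\a(\S)$ over those $\S$ for which $\sum_{(a,b)\in p}w_{(a,b)}\frac{d\Q_{(a,b)}}{d\S}=1$ holds almost surely for every $p\in P$, with the usual convention that an infimum over the empty set is $+\infty$. I do not expect any substantive obstacle beyond the bookkeeping of infinite values and the sup-interchange flagged above.
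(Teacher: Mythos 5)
Your proof is correct and follows the same route as the paper: write $\Lambda$ as the linear program in $u$, substitute into the Fenchel conjugate, swap the suprema, evaluate the inner infimum at $x=Bu$ (valid since $z\geq 0$), and observe that a linear functional on all of $\R^m$ (no sign constraint on $u$ in the path formulation) has supremum $0$ or $+\infty$ according to whether it vanishes, which yields the equality constraint $\sum_{(a,b)\in p}z_{(a,b)}=1$ for every $p\in P$. The remaining two claims follow directly from the definitions of the multivariate $g$-divergence and the systemic penalty function, exactly as you argue.
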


\begin{proof}
	Let $z\in\R^d_+$. We have
	\begin{align*}
	g(z) &= \sup_{x\in\R^d}\of{\Lambda(x)-z^{\mathsf{T}}x}\\
	&= \sup_{u\in\R^m}\of{\sum_{p\in P}u_p - \inf_{x\in\R^d}\cb{z^{\mathsf{T}}x\mid \sum_{\cb{p\in P\mid(a,b)\in p}}u_p\leq x_{(a,b)}\text{ for every }(a,b)\in \mathcal{E}}}\\
	&= \sup_{u\in\R^m}\sum_{p\in P}\of{1-\sum_{(a,b)\in p}z_{(a,b)}}u_p\\
	&=\begin{cases}
	0 & \text{ if }\sum_{(a,b)\in p}z_{(a,b)}= 1 \text{ for every }p\in P,\\
	+\infty & \text{ else}.
	\end{cases} 
	\end{align*}
	The rest follows directly from the definitions of the multivariate $g$-divergence and the systemic penalty function.
\end{proof}

Note that the sensitive systemic risk measure $R^{\sen}$ provides a quantification of the risk resulting from a random shock $X=(X_{(a,b)})_{(a,b)\in\mathcal{E}}$ that affects the capacities of the arcs. In light of Proposition~\ref{maxflow-prop}, we assign a probability measure $\Q_{(a,b)}$ and a weight $w_{(a,b)}$ to each arc $(a,b)$. In addition, we assign a probability measure $\S$ to the (possibly hypothetical) arc $(s,t)$, which provides a direct connection from the source to the sink. We also assume that the weight of this arc is $w_{(s,t)}=1$. Then, the weighted density $w_{(a,b)}\frac{d\Q_{(a,b)}}{d\S}$ can be seen as the unit cost of carrying a unit flow on arc $(a,b)$ and the unit cost of carrying a unit flow on arc $(s,t)$ is $1$. Therefore, given $\S$, we say that the choices of $\Q, w$ are compatible with $\S$ if, for each path $p\in P$, the total cost of carrying a unit flow along $p$ coincides with the cost of carrying a unit flow directly from the source to the sink (with probability one), that is, if
\[
\sum_{(a,b)\in p} w_{(a,b)}\frac{d\Q_{(a,b)}}{d\S}=1=w_{(a,b)}\frac{d\S}{d\S}.
\]

\section{Model uncertainty interpretation}\label{modeluncertainty}

We finish the main part of the paper by pointing out an observation that bridges the sensitive systemic risk measure $R^\sen$ with the so-called \emph{multivariate utility-based shortfall risk measures} of recent interest in the literature.

\subsection{Multivariate shortfall risk measure}

As the aggregation function $\Lambda\colon\R^d\to\R$ is assumed to be increasing and concave, it can be seen as a multivariate utility function. Motivated by its univariate counterpart introduced in \cite{scalarshortfall}, a multivariate shortfall risk measures with respect to $\Lambda$ can be defined as follows.

\begin{defn}\label{shortfall}
	Let $\lambda^0\in - \interior\Lambda(\R^d)$. The set-valued function $R(\cdot;\Pr,\lambda^0)\colon L_d^\infty\to 2^{\R^d}$ defined by
	\[
	R(X; \Pr, \lambda^0) = \cb{z\in \R^d \mid \E\sqb{-\Lambda(X+z)}\leq \lambda^0}
	\]
	for $X\in L_d^\infty$ is called the shortfall risk measure with threshold level $\lambda^0$ and model $\Pr$.
\end{defn}

The financial interpretation of the shortfall risk measure is that, for a multivariate financial position $X$, it collects the set of all deterministic porfolios $z\in\R^d$ for which the \emph{expected loss} of $X+z$ does not exceed the fixed threshold level $\lambda^0$. (The use of $\Pr$ in the notation $R(X;\Pr,\lambda^0)$ will become clear when this risk measure is considered under model uncertainty in Section~\ref{uncertainty} below.) Such risk measures based on multivariate utility functions have been studied recently in \cite{sdrm, samuel}.

The shortfall risk measure $R(\cdot;\Pr,\lambda^0)$ defined above is an example of a sensitive systemic risk measure where the risk measure for aggregate values is chosen to be a shifted expectation, namely,
\begin{equation}\label{neg-exp}
\rho(Y)=\E\sqb{-Y}-\lambda^0
\end{equation}
for every $Y\in L^\infty$. A direct application of Theorem~\ref{mainthm} yields the following dual representation. As this representation suggests when compared to Theorem~\ref{mainthm}, using $R(\cdot;\Pr,\lambda^0)$ as a systemic risk measure amounts to assuming that the probability measure (the model) for society is known with certainty and is equal to $\Pr$.

\begin{prop}\label{shortfall-dual}
	In the setting of Definition~\ref{shortfall}, it holds
	\[
	R(X;\Pr,\lambda^0) =  \bigcap_{\Q\in \M_d(\Pr), w\in\R^d_+\sm\cb{0}}\cb{z\in \R^d \mid w^{\mathsf{T}}z\geq w^{\mathsf{T}}\E^{\Q}\sqb{-X}-\lambda^0-\E\sqb{g\of{w\cdot\frac{d\Q}{d\Pr}}}}
	\]
	for every $X\in L_d^\infty$.
\end{prop}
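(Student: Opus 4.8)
The plan is to recognize $R(\cdot;\Pr,\lambda^0)$ as the sensitive systemic risk measure $R^\sen$ associated with the particular monetary risk measure $\rho$ given by \eqref{neg-exp}, and then specialize the dual representation in the second part of Theorem~\ref{mainthm}. First I would verify that $\rho(Y)=\E\sqb{-Y}-\lambda^0$ is a convex monetary risk measure with the Fatou property: monotonicity, translativity and convexity are immediate since $\rho$ is affine, and the Fatou property follows from dominated convergence applied to a bounded, almost surely convergent sequence. Assumption~\ref{assume} also holds, because $\rho(0)=-\lambda^0$ and $\lambda^0$ is, by hypothesis, an interior point of $-\Lambda(\R^d)$, so that $-\lambda^0\in\interior\Lambda(\R^d)$. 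With the corresponding acceptance set $\A=\cb{Y\in L^\infty\mid\E\sqb{-Y}\leq\lambda^0}$, Definition~\ref{sensitive} gives directly $R^\sen(X)=\cb{z\in\R^d\mid\E\sqb{-\Lambda(X+z)}\leq\lambda^0}=R(X;\Pr,\lambda^0)$.

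The second step is to identify the minimal penalty function $\alpha$ of $\rho$. Since $\rho(Y)=\E^\Pr\sqb{-Y}-\lambda^0$, one has $\alpha(\Pr)=\lambda^0$; and for any $\S\in\M(\Pr)$ with $\S\neq\Pr$, the random variable $\frac{d\S}{d\Pr}-1$ is a nonzero element of $L^1$, so taking $-Y$ to be an arbitrarily large multiple of its sign in $\alpha(\S)=\sup_{Y\in L^\infty}\of{\E^\S\sqb{-Y}-\rho(Y)}=\lambda^0+\sup_{Y\in L^\infty}\E^\Pr\sqb{(-Y)\of{\frac{d\S}{d\Pr}-1}}$ shows $\alpha(\S)=+\infty$. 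Hence $\alpha$ equals $\lambda^0$ at $\Pr$ and $+\infty$ everywhere else.

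Plugging this into Definition~\ref{syspen}: because $\alpha(\S)=+\infty$ for every $\S\neq\Pr$ and because $\Pr\in\M^e(\Pr)$, the infimum over $\S\in\M^e(\Pr)$ defining $\alpha^\s(\Q,w)$ is attained at $\S=\Pr$, so that
\[
\alpha^\s(\Q,w)=\lambda^0+\E\sqb{g\of{w\cdot\frac{d\Q}{d\Pr}}}
\]
for every $\Q\in\M_d(\Pr)$ and $w\in\R^d_+\sm\cb{0}$. Substituting this expression into the first of the two representations of $R^\sen$ in the second part of Theorem~\ref{mainthm} produces exactly the asserted formula for $R(X;\Pr,\lambda^0)$.

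I do not expect any genuine difficulty here; the two places that deserve a sentence of care are (i) that $\alpha$ truly takes the value $+\infty$ off $\Pr$, which needs the elementary $L^1$-argument above rather than merely the affineness of $\rho$, and (ii) that restricting the infimum in Definition~\ref{syspen} to $\M^e(\Pr)$ causes no loss, since $\Pr$ is equivalent to itself and hence the minimizer $\S=\Pr$ is admissible. Everything else is a direct invocation of Theorem~\ref{mainthm}.
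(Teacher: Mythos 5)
Your proof is correct and follows essentially the same route as the paper's: both reduce to Theorem~\ref{mainthm} by identifying the minimal penalty function of the shifted-expectation risk measure as $\lambda^0$ at $\Pr$ and $+\infty$ elsewhere, which collapses the infimum over $\S\in\M^e(\Pr)$ in Definition~\ref{syspen} to $\S=\Pr$. The only difference is that you spell out the supporting verifications (Fatou property, Assumption~\ref{assume}, the $L^1$-argument showing $\alpha(\S)=+\infty$ for $\S\neq\Pr$, admissibility of $\S=\Pr$ in $\M^e(\Pr)$) that the paper leaves implicit under ``immediate,'' and these verifications are accurate.
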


\begin{proof}
	This is immediate from Theorem~\ref{mainthm} once we realize that the penalty function of the risk measure defined in \eqref{neg-exp} is given by
	\[
	\a(\S) = \begin{cases}
	\lambda^0 & \text{ if }\S = \Pr, \\
	+\infty & \text{ else},
	\end{cases}
	\]
	for $\S\in \M(\Pr)$.
\end{proof}

\begin{example}
	Consider the Eisenberg-Noe model without central clearing as in Section~\ref{noccp}. In this case, the shortfall risk measure $R(\cdot;\Pr.\lambda^0)$ takes the form
	\begin{align*}
	&R(X;\Pr,\lambda^0)\\
	&=\cb{z\in\R^d\mid \E\sqb{\sup\cb{\sum_{i=1}^d a_{i0}p_i\mid p_i\leq X_i+z_i+\sum_{j=1}^d a_{ji}p_j,p_i\in[0,\bar{p}_i],i\in\cb{1,\ldots,d}}}\geq -\lambda^0}\\
	&=\bigcap_{\Q\in \M_d(\Pr), w\in\R^d_+\sm\cb{0}}\cb{z\in \R^d \mid w^{\mathsf{T}}z\geq w^{\mathsf{T}}\E^{\Q}\sqb{-X}\negthinspace -\negthinspace\lambda^0\negthinspace -\negthinspace \sum_{i=1}^d \E\sqb{ \of{\sum_{j=0}^d \ell_{ij}\of{w_j\frac{d\Q_j}{d\Pr}\negthinspace -\negthinspace w_i\frac{d\Q_i}{d\Pr}}}^+}}
	\end{align*}
	for $X\in L_d^\infty$, by Proposition~\ref{noccp-comp} and Proposition~\ref{shortfall-dual}.
\end{example}

\subsection{A model uncertainty representation}\label{uncertainty}

The connection between shortfall risk measures and the sensitive systemic risk measure $R^\sen$ can be exploited further by rearranging the order of intersections/suprema in the dual representation given by Theorem~\ref{mainthm}. In what follows, we show that \emph{any} sensitive systemic risk measure $R^\sen$ (i.e. for an arbitrary choice of $\rho$) can be seen as a shortfall risk measure under \emph{model uncertainty}.

\begin{prop}\label{modeluncertaintyrep}
	Suppose that $\Lambda(\R^d)=\R$. It holds
	\[
	R^\sen(X) = \bigcap_{\cb{\S\in\M^e(\Pr)\;\mid\; \a(\S)\in\R}}R(X;\S,\a(\S))=\bigcap_{\cb{\S\in\M^e(\Pr)\;\mid\; \a(\S)\in\R}}\cb{z\in\R^d\mid \E^\S\sqb{-\Lambda(X+z)}\leq \a(\S)}
	\]
	for every $X\in L_d^\infty$.
\end{prop}

In other words, regardless of the choice of $\rho$, the sensitive systemic risk measure can always be seen as a shortfall risk measure subject to an uncertainty in the probability measure $\S$ of society. To measure systemic risk, one makes a conservative computation (intersection) of the shortfall risk over all sensible choices of $\S$. Moreover, in each of the shortfall risk measure $R(\cdot;\S,\a(\S))$, the penalty $\a(\S)$ for choosing $\S$ serves as a maximum allowable expected loss under $\S$.

\begin{proof}[Proof of Proposition~\ref{modeluncertaintyrep}]
	For fixed $X\in L_d^\infty$, we have
	\begin{align*}
	R^\sen(X) &= \bigcap_{\Q\in \M_d(\Pr), w\in\R^d_+\sm\cb{0}}\cb{z\in \R^d \mid w^{\mathsf{T}}z\geq w^{\mathsf{T}}\E^{\Q}\sqb{-X}-\a^{\sys}(\Q,w)}\\
	&=\bigcap_{\S\in\M(\Pr)}\bigcap_{\substack{\Q\in \M_d(\Pr), w\in\R^d_+\sm\cb{0}\colon \\ \forall i\colon w_i\Q_i\ll\S}}\cb{z\in \R^d \mid w^{\mathsf{T}}z\geq w^{\mathsf{T}}\E^{\Q}\sqb{-X}-\a(\S)-\E^\S\sqb{g\of{w\cdot\frac{d\Q}{d\S}}}}\\
	&=\bigcap_{\S\in\M(\Pr)}\bigcap_{\substack{\Q\in \M_d(\Pr), w\in\R^d_+\sm\cb{0}\colon \\ \forall i\colon \Q_i\ll\S}}\cb{z\in \R^d \mid w^{\mathsf{T}}z\geq w^{\mathsf{T}}\E^{\Q}\sqb{-X}-\a(\S)-\E^\S\sqb{g\of{w\cdot\frac{d\Q}{d\S}}}}\\
	&=\bigcap_{\S\in\M(\Pr)}\bigcap_{\Q\in \M_d(\S), w\in\R^d_+\sm\cb{0}}\cb{z\in \R^d \mid w^{\mathsf{T}}z\geq w^{\mathsf{T}}\E^{\Q}\sqb{-X}-\a(\S)-\E^\S\sqb{g\of{w\cdot\frac{d\Q}{d\S}}}}\\
	&=\bigcap_{\cb{\S\in\M(\Pr)\;\mid\; \a(\S)\in -\Lambda(\R^d)}}R(X;\S,\a(\S)).
	\end{align*}
	The following arguments make the above computation valid. The first two equalities are by Theorem~\ref{mainthm}. The third equality follows from the basic observation that the halfspace inside the intersections is not affected by the choice of $\Q_i\in\M_d(\Pr)$ whenever $w_i=0$; hence we may impose $\Q_i\ll\S$ in this case as well. The fourth equality is trivial. The fifth equality follows since the inner intersection in the penultimate line is the dual representation of the shortfall risk measure with threshold level $\a(\S)\in \R$ and model $\S$; see Proposition~\ref{shortfall-dual}. Here, we need to exclude the cases where $\S\in\M(\Pr)$ is such that $\a(\S)=+\infty$. In such cases, the inner intersection in the penultimate line simply gives $\R^d$, which does not change the outer intersection in the same line. Hence, the result follows.
\end{proof}

\section{Proofs}\label{pf}

\subsection{Proofs of the results in Section~\ref{general}}\label{proof2}

\begin{proof}[Proof of Proposition~\ref{properties}]
	\text{}
	\begin{enumerate}
		\item Let $X,Z\in L_d^\infty$. To show monotonicity, suppose that $X\geq Z$. Since $\Lambda$ is increasing and $\rho$ is monotone, it holds $\rho(\Lambda(X))\leq \rho(\Lambda(Z))$. By \eqref{ins-rep}, it follows that $R^\is (X)\supseteq R^\is (Z)$. To show convexity, let $x\in  R^\is(X)$, $z\in R^\is(Z)$ and $\gamma\in[0,1]$. By \eqref{ins-rep}, it holds $\rho(\Lambda(X))\leq \sum_{i=1}^d x_i$ and $\rho(\Lambda(Z))\leq \sum_{i=1}^d z_i$. Then, 
		\begin{align*}
		\rho(\Lambda(\gamma X+(1-\gamma)Z))&\leq \rho\of{\gamma \Lambda(X)+(1-\gamma)\Lambda(Z)}\\
		&\leq \gamma\rho(\Lambda(X))+(1-\gamma)\rho(\Lambda(Z))\\
		&\leq \sum_{i=1}^d \of{\gamma x_i + (1-\gamma) z_i},
		\end{align*}
		where the first inequality follows by the concavity of $\Lambda$ and the monotonicity of $\rho$, the second inequality follows by the convexity of $\rho$, and the last inequality is by the supposition. Therefore, $\gamma x + (1-\gamma) z \in R^{\is}(\gamma X + (1-\gamma) Z)$ and convexity follows. To show closedness, let $z\in\R^d$. As a result of the convexity of $R^\is$, the  set $\L_z\coloneqq\cb{X\in L_d^\infty\mid z\in R^{\is}(X)}$ is convex. Hence, by \citet[Lemma~A.65]{fs:sf}, it suffices to show that $\mathcal{L}_{r,z}\coloneqq \cb{X\in \L_z\mid \|X\|_{\infty}\leq r}$ is closed in $L_d^1$ for every $r> 0$. (Here, $\|X\|_{\infty}\coloneqq\esssup\abs{X}$ is the essential supremum norm on $L_d^\infty$ with respect to some fixed norm $\abs{\cdot}$ on $\R^d$.) To that end, let $z\in\R^d$, $r>0$, and $(X^n)_{n\geq 1}$ be a sequence in $\L_{r,z}$ converging to some $X\in L_d^1$ in $L_d^1$. Then, there exists a subsequence $(X^{n_k})_{k\geq 1}$ converging to $X$ almost surely. Since
		\[
		\abs{X}\leq \abs{X^{n_k}-X}+\abs{X^{n_k}}\leq  \abs{X^{n_k}-X}+r
		\]
		for every $k\geq 1$, it follows that $\|X\|_\infty\leq r$. On the other hand, $(\Lambda(X^{n_k}))_{k\geq 1}$ converges to $\Lambda(X)$ almost surely since $\Lambda$ is a continuous function as a finite concave function on $\R^d$. As $(\Lambda(X^{n_k}))_{k\geq 1}$ is also a bounded sequence in $L^\infty$, by the Fatou property of $\rho$,
		\[
		\rho(\Lambda(X))\leq \liminf_{n\rightarrow\infty}\rho(\Lambda(X^{n_k}))\leq \sum_{i=1}^d z_i.
		\]
		so that $z\in R^{\is}(X)$. Hence, $X\in \L_{r,z}$ and closedness follows. Finiteness at zero is trivial from \eqref{ins-rep} since $\rho(\Lambda(X))\in \R$.
		\item Let $X,Z\in L_d^\infty$. To show monotonicity, suppose that $X\geq Z$. Since $\Lambda$ is increasing and $\rho$ is monotone, it holds $\rho(\Lambda(X+z))\leq \rho(\Lambda(Z+z))$ for every $z\in\R^d$. By \eqref{sen-rep}, it follows that $R^{\sen}(X)\supseteq R^{\sen}(Z)$. To show convexity, let $x\in R^{\sen}(X)$, $z\in R^{\sen}(Z)$ and $\gamma\in[0,1]$. By \eqref{sen-rep}, it holds $\rho(\Lambda(X+x))\leq 0$ and $\rho(\Lambda(Z+z))\leq 0$. Similar to the proof for the insensitive case,
		\begin{align*}
		\rho(\Lambda(\gamma X+(1-\gamma) Z + \gamma x +(1-\gamma) z))\leq \gamma \rho(\Lambda(X+x))+(1-\gamma)\rho(\Lambda(Z+z))\leq 0.
		\end{align*}
		Hence, $\gamma x + (1-\gamma) z \in R^\sen (\gamma X + (1-\gamma)Z)$ and convexity follows. To show closedness, similar to the proof for the insensitive case, it suffices to show that the set $\{X\in L_d^\infty\mid z\in R^\sen(X),\;$ $ \|X\|_\infty\leq r\}$ is closed in $L_d^1$ for arbitrarily fixed $r>0$ and $z\in \R^d$. Let $(X^n)_{n\geq 1}$ be a sequence in this set that converges to some $X\in L_d^1$ in $L_d^1$. Using the Fatou property of $\rho$ as above, it can be checked that $\rho(\Lambda(X+z))\leq \liminf_{n\rightarrow\infty}\rho(\Lambda(X^{n_k}+z))\leq 0$ for a subsequence $(X^{n_k})_{k\geq 1}$. Hence, $z\in R^\sen (X)$ and closedness follows. To show finiteness at zero, note that
		\[
		R^\sen (0)=\cb{z\in\R^d\mid \rho(\Lambda(z))\leq 0}=\cb{z\in\R^d\mid \rho(0)\leq \Lambda(z)}=\Lambda^{-1}([\rho(0),+\infty)),
		\]
		where the first equality is by \eqref{sen-rep} and the second equality is by the translativity of $\rho$. Since $\rho(0)\in \interior\Lambda(\R^d)$ by Assumption~\ref{assume}, it follows that $R^{\sen}(0)\notin\cb{\emptyset,\R^d}$. Finally, translativity follows since
		\begin{align*}
		R^{\sen}(X+z)&=\cb{x\in\R^d\mid\rho(\Lambda(X+z+x))\leq 0}\\
		&=\cb{x\in\R^d\mid \rho(\Lambda(X+x))\leq 0}-z\\
		&=R^{\sen}(X)-z
		\end{align*}
		for every $z\in \R^d$.
	\end{enumerate}
\end{proof}

\begin{proof}[Proof of Proposition~\ref{coherence}]
	We have
	\begin{align*}
	R^\is(\gamma  X) &=\cb{z\in \R^d\mid \rho\of{\Lambda(\gamma  X)+\sum_{i=1}^d z_i}\leq 0 }\\
	&=\cb{z\in \R^d\mid \rho\of{\gamma \Lambda(X)+\sum_{i=1}^d z_i}\leq 0 }\\
	&=\cb{z\in \R^d\mid \gamma \rho\of{\Lambda(X)+\sum_{i=1}^d \frac{z_i}{\gamma }}\leq 0 }\\
	&=\cb{z\in \R^d\mid \rho\of{\Lambda(X)+\sum_{i=1}^d \frac{z_i}{\gamma }}\leq 0 }\\
	&=\gamma  \cb{u\in \R^d\mid \rho\of{\Lambda(X)+\sum_{i=1}^d u_i}\leq 0 }\\
	&=\gamma R^\is(X).
	\end{align*}
	The proof for $R^\sen$ is similar.
\end{proof}

\subsection{Proof of Theorem~\ref{mainthm}}\label{proof}

The proof of Theorem~\ref{mainthm} is preceded by the three lemmata below.

First, let us recall a fundamental result in convex duality. For a function $h\colon \X\to \R\cup\cb{+\infty}$ on a locally convex topological linear space $\X$, we define its epigraph as the set
\[
\epi h \coloneqq  \cb{(x, r)\in \X \times \R \mid h(x)\leq r},
\]
and the conjugate function $h^\ast\colon\X^\ast\to \R\cup\cb{+\infty}$ on the topological dual space $\X^\ast$ by
\[
h^\ast(x^\ast)\coloneqq  \sup_{x\in\X} \of{\ip{x,x^\ast}-h(x)}
\]
for every $x^\ast\in\X^\ast$, where $\ip{\cdot,\cdot}$ is the natural bilinear mapping of the dual pair $(\X^\ast,\X)$. The epigraph $\epi h^\ast$ of $h^\ast$ is defined similarly as a subset of $\X^\ast\times\R$. According to the Fenchel-Moreau biconjugation theorem \cite[Theorem~2.3.3]{zalinescu}, if $h$ is a proper convex lower semicontinuous function, then $h=\of{h^\ast}^\ast$, that is,
\begin{equation}\label{fenchelmoreau}
h(x)=\sup_{x^\ast\in\X^\ast}\of{\ip{x^\ast,x}-h^\ast(x^\ast)}
\end{equation}
for every $x\in\X$. Moreover, if $\eta\colon\X^\ast\to\bar{\R}$ is another function whose closure is $h^\ast$, that is, $\epi h^\ast=\cl\epi\eta$, then we also have $h=\eta^\ast$, that is,
\begin{equation}\label{minorant}
h(x)=\sup_{x^\ast\in\X^\ast}\of{\ip{x^\ast,x}-\eta(x^\ast)}
\end{equation}
for evey $x\in\X$. This is an immediate consequence of \citet[Theorem~2.3.1]{zalinescu}. The next lemma provides a slight variation of \eqref{minorant} that will be useful in the proof of Theorem~\ref{mainthm}.
\begin{lem}\label{minorantnozero}
	Let $h\colon\X\to\R\cup\cb{+\infty}$ be a proper convex lower semicontinuous function and $\eta\colon\X^\ast\to\bar{\R}$ a function whose closure is $h^\ast$. Then, for every $x\in\X$,
	\begin{equation*}
	\sup_{x^\ast\in\X^\ast\sm\cb{0}}\of{\ip{x^\ast,x}-h^\ast(x^\ast)}=\sup_{x^\ast\in\X^\ast\sm\cb{0}}\of{\ip{x^\ast,x}-\eta(x^\ast)}.
	\end{equation*}
\end{lem}
\begin{proof}
	Let $x\in\X$. It is easy to see that
	\[
	\sup_{x^\ast\in\X^\ast\sm\cb{0}}\of{\ip{x^\ast,x}-h^\ast(x^\ast)}=\sup_{x^\ast\in\X^\ast\sm\cb{0}}\ip{(x^\ast,h^\ast(x^\ast)),(x,-1)}=\sup_{(x^\ast,s)\in\epi h^\ast, x^\ast\neq 0}\ip{(x^\ast,s),(x,-1)},
	\]
	where, with a slight abuse of notation, $\ip{(\cdot,\cdot),(\cdot,\cdot)}$ denotes the natural bilinear mapping of the dual pair $(\X^\ast\times\R,\X\times\R)$ of product spaces. We claim that
	\begin{equation}\label{epinotzero}
	\sup_{\substack{(x^\ast,s)\in\epi h^\ast\colon\\ x^\ast\neq 0}}\ip{(x^\ast,s),(x,-1)}=\sup_{\substack{(x^\ast,s)\in\epi \eta\colon\\ x^\ast\neq 0}}\ip{(x^\ast,s),(x,-1)}.
	\end{equation}
	The $\geq$ part is clear since $\epi h^\ast =\cl \epi \eta \supseteq \epi \eta$. To show the $\leq$ part, let $(x^\ast,s)\in\epi h^\ast$ with $x^\ast\neq 0$. So there exists a net $(x^\ast_\theta,s_{\theta}))_{\theta\in\Theta}$ in $\epi \eta$ that converges to $(x^\ast,s)$. ($\Theta$ denotes the directed index set of the net.) Moreover, since $x^\ast\neq 0$, we can have $x^\ast_\theta=0$ only for finitely many $\theta\in\Theta$. Excluding such indices and passing to a subnet, we can assume without loss of generality that $x^\ast_\theta\neq 0$ for every $\theta\in\Theta$. Hence, by the continuity of the bilinear mapping,
	\[
	\sup_{\substack{(x^\ast,s)\in\epi \eta\colon\\ x^\ast\neq 0}}\ip{(x^\ast,s),(x,-1)}\geq \lim_{\theta\in\Theta}\ip{(x^\ast_\theta,s_\theta),(x,-1)}=\ip{(x^\ast,s),(x,-1)}.
	\]
	Since $(x^\ast,s)\in\epi h^\ast$ with $x^\ast\neq 0$ is arbitrary, the $\leq$ part of \eqref{epinotzero} follows. Therefore,
	\begin{align*}
	\sup_{x^\ast\in\X^\ast\sm\cb{0}}\of{\ip{x^\ast,x}-h^\ast(x^\ast)}&=\sup_{\substack{(x^\ast,s)\in\epi \eta\colon\\ x^\ast\neq 0}}\ip{(x^\ast,s),(x,-1)}\\
	&=\sup_{x^\ast\in\X^\ast\sm\cb{0}}\ip{(x^\ast,\eta(x^\ast)),(x,-1)}\\
	&=\sup_{x^\ast\in\X^\ast\sm\cb{0}}\of{\ip{x^\ast,x}-\eta(x^\ast)},
	\end{align*}
	which finishes the proof.
\end{proof}


Consider the function $f\colon L_d^1\to \bar{\R}$ defined by
\begin{equation}\label{f_lambda}
f(U)\coloneqq \inf_{V\in -L^1_{+}}\cb{-\E\sqb{Vg\of{\frac{U}{V}}1_{\cb{V<0}}}+ \rho^\ast\of{V}\mid \E\sqb{V}=-1,\ \Pr\cb{V=0,U\neq 0}=0}
\end{equation}
for $U\in -L_{d,+}^1$, and by $f(U)=+\infty$ for $U\notin-L_{d,+}^1$. 

\begin{lem}\label{conj}
	The function $\of{\rho\circ\Lambda}^\ast$ is the closure of $f$, that is, $\epi\of{\rho\circ\Lambda}^\ast = \cl \epi f$.
\end{lem}

\begin{proof}
	The proof is based on a general conjugation theorem for the composition of an increasing convex function with a convex function, see \citet[Theorem~3.1]{radu}. To that end, let us consider $\Lambda$ as a function on $L_d^\infty$ with values in $L^\infty$, which is a convex function when $L^\infty$ is partially ordered by the cone $-L^\infty_{+}$: $\Lambda(\gamma Z^1+(1-\gamma)Z^2)\in  \gamma \Lambda(Z^1)+(1-\gamma)\Lambda(Z^2)- L_+^\infty$ for every $Z^1,Z^2\in L_d^\infty$, $\gamma\in [0,1]$. Similarly, $\rho$ is increasing with respect to this partial order on $L^\infty$: $Y^1 \in Y^2 - L_+^\infty$ implies $\rho(Y^1) \geq \rho(Y^2)$ for every $Y^1,Y^2\in L^\infty$. Let us define $h_V(Z)=\E\sqb{V\Lambda(Z)}$ for every $V\in L^1$ and $Z\in L_d^\infty$. By \citet[Theorem~3.1]{radu}, using Assumption~\ref{assume}, $\of{\lambda\rho\circ\Lambda}^\ast$ is the closure of the function $\bar{f}$ defined by
	\[
	\bar{f}(U) = \inf_{V\in -L^1_+} \of{h_V^\ast(U)+\rho^\ast(V)}
	\]
	for every $U\in L_d^1$. Let us fix $U\in L_d^1$ and $V\in -L_+^1$. We have
	\begin{align*}
	h_V^\ast(U) &= \sup_{X\in L_d^\infty}\of{\E\sqb{U^{\mathsf{T}}X}-\E\sqb{V\Lambda(X)}}\\
	&= \sup_{X\in L_d^\infty}\E\sqb{U^{\mathsf{T}}X-V\Lambda(X)}\\
	&= \E\sqb{\sup_{x\in \R^d}\of{U^{\mathsf{T}}x-V\Lambda(x)}},
	\end{align*}
	where the last line follows by the general rule for the optimization of integral functionals; see \citet[Theorem~14.60]{rockafellar2}. For $u\in\R^d,v\in(-\infty,0]$, note that 
	\[
	\sup_{x\in\R^d}\of{u^\mathsf{T}x-v\Lambda(x)}=\begin{cases}0&\text{if }v=0, u=0,\\
	+\infty&\text{if }v=0, u\neq 0,\\
	-vg\of{\frac{u}{v}}&\text{if }v<0.\end{cases}
	\]
	Recalling that $g(z)=+\infty$ for every $z\notin\R^d_+$, we may write
	\[
	\sup_{x\in\R^d}\of{u^\mathsf{T}x-v\Lambda(x)}=\begin{cases}0&\text{if }v=0, u=0,\\
	+\infty&\text{if }v=0, u\neq 0,\\
	+\infty&\text{if }v<0, u\notin-\R^d_+,\\
	-vg\of{\frac{u}{v}}&\text{if }v<0, u\in-\R^d_+.\end{cases}
	\]
	In particular, if $u\notin-\R^d_+$, then
	\[
	\sup_{x\in\R^d}\of{u^\mathsf{T}x-v\Lambda(x)}=+\infty
	\]
	for every $v\leq 0$. Hence, if $U\notin-L_{d,+}^1$, then $h_V^\ast(U)=+\infty$ for every $V\in -L_+^1$ so that $\bar{f}(U)=f(U)=+\infty$. Let us assume that $U\in -L^1_{d,+}$ and take $V\in-L_+^1$. Then, 
	\[
	h_V^\ast(U)=\begin{cases}+\infty&\text{if }\Pr\cb{V=0,U\neq 0}>0,\\
	-\E\sqb{Vg\of{\frac{U}{V}}1_{\cb{V<0}}}&\text{if }\Pr\cb{V=0,U\neq 0}=0.\end{cases}
	\]
	On the other hand, using the monotonicity and translativity of $\rho$, it can be checked that $\rho^\ast \of{V}<+\infty$ implies $\E\sqb{V}=-1$; see \citet[Remark~4.18]{fs:sf}, for instance. Hence, $\bar{f}(U)=f(U)$ when $U\in-L_{d,+}^1$; see \eqref{f_lambda}. Therefore, the functions $f$ and $\bar{f}$ coincide, and the result follows.
	
\end{proof}

Before the proof of Theorem~\ref{mainthm}, we provide a lemma of independent interest. It should be a known result, the proof is included for completeness.
\begin{lem}\label{measures}
	Let $\mu_1,\mu_2$ be two finite measures on $(\Omega,\F)$ such that $\mu_1\ll \Pr$ and $\mu_2\ll\Pr$. Then, $\mu_1\ll \mu_2$ if and only if
	\begin{equation}\label{rncond}
	\Pr\cb{\frac{d\mu_2}{d\Pr}=0,\ \frac{d\mu_1}{d\Pr}>0}=0.
	\end{equation}
\end{lem}
\begin{proof}
	Suppose that $\mu_1\ll\mu_2$. Then, by a corollary of Radon-Nikodym theorem, see \citet[Exercise~A.2.1]{fs:sf}, for instance, we may write
	\[
	\frac{d\mu_1}{d\Pr}=\frac{d\mu_1}{d\mu_2}\cdot\frac{d\mu_2}{d\Pr},
	\]
	where the equality is understood in the $\Pr$-almost sure sense. Hence, with $\Pr$-probability one, $\frac{d\mu_2}{d\Pr}=0$ implies that $\frac{d\mu_1}{d\Pr}=0$ so that \eqref{rncond} holds. Conversely, suppose that \eqref{rncond} holds. Let $A\in\F$ be an event such that
	\[
	\mu_2(A)=\E\sqb{1_A\frac{d\mu_2}{d\Pr}}=0.
	\]
	Hence, $1_A\frac{d\mu_2}{d\Pr}=0$ $\Pr$-almost surely so that $\Pr(A\cap\{\frac{d\mu_2}{d\Pr}>0\})=0$. This and \eqref{rncond} imply that
	\[
	\mu_1(A)=\E\sqb{1_A\frac{d\mu_1}{d\Pr}}=\E\sqb{1_A\frac{d\mu_1}{d\Pr}1_{\cb{\frac{d\mu_2}{d\Pr}>0}}}+\E\sqb{1_A\frac{d\mu_1}{d\Pr}1_{\cb{\frac{d\mu_2}{d\Pr}=0}}}=0.
	\]
	This shows that $\mu_1\ll\mu_2$.
\end{proof}

\begin{proof}[Proof of Theorem~\ref{mainthm}]\text{}
	\begin{enumerate}[1.]
		\item By the arguments in the proof of Proposition~\ref{properties}, it follows that $\rho\circ\Lambda$ is a proper convex weak$^\ast$ lower semicontinuous function on $L_d^\infty$. It is also a decreasing function so that $\of{\rho\circ\Lambda}^\ast(U)=+\infty$ for $U\notin - L_{d,+}^1$. Let $X\in L_d^\infty$. By the Fenchel-Moreau biconjugation theorem,
		\begin{align}\label{rholam}
		\rho\of{\Lambda(X)} =\sup_{U\in -L_{d,+}^1}\of{\E\sqb{U^{\mathsf{T}}X}-\of{ \rho \circ \Lambda}^\ast(U)}.
		\end{align}
		Moreover, we can indeed exclude $U\equiv 0$ in this computation and write
		\begin{align}\label{nozero}
		\rho\of{\Lambda(X)} =\sup_{U\in -L_{d,+}^1\sm\cb{0}}\of{\E\sqb{U^{\mathsf{T}}X}-\of{ \rho \circ \Lambda}^\ast(U)}.
		\end{align}
		To see this, we first claim that there exists $\bar U\in -L_{d,+}^1\sm\cb{0}$ with $\of{\rho\circ \Lambda}^\ast(\bar U)\in\R$. Suppose otherwise. Since $(\rho\circ \Lambda)^\ast$ is a proper function as the conjugate of a proper function, we must have $\of{\rho\circ\Lambda}^\ast(\bar U)<+\infty$ if and only if $\bar U=0$. By \eqref{rholam}, this would imply $\rho(\Lambda(X))=-\of{\rho\circ\Lambda}^\ast(0)=\inf_{Z\in L_d^\infty}\rho(\Lambda(Z))$ for every $X\in L_d^\infty$ so that $\rho\circ\Lambda$ is a constant function. This is a contradiction to the assumption that $\Lambda$ is non-constant; see Section~\ref{general}. Hence, the claim holds. Let us define $U^n \coloneqq \frac{1}{n}\bar U \in -L_{d,+}^1\sm\cb{0}$ for each $n\in\mathbb{N}$ so that $(U^n)_{n\in\mathbb{N}}$ converges to $0$ in $L_d^1$. Using the concavity of the function $U\mapsto \zeta(U)\coloneqq \E\sqb{U^{\mathsf{T}}X}-\of{ \rho \circ \Lambda}^\ast(U)$, it holds
		\begin{align}\label{convergence}
		\sup_{U\in -L_{d,+}^1\sm\cb{0}}\zeta(U)\geq \zeta(U^n)\geq \frac{1}{n}\zeta(\bar U)+\of{1-\frac{1}{n}}\zeta(0)
		\end{align}
		for every $n\in\mathbb{N}$. Since we have $-\infty<\zeta(\bar U)\leq \rho\of{\Lambda(X)}<+\infty$ by \eqref{rholam} and the finite-valuedness of $\rho$, we obtain
		\[
		\sup_{U\in -L_{d,+}^1\sm\cb{0}}\zeta(U) \geq \zeta(0)
		\]
		by passing to the limit in \eqref{convergence} as $n\rightarrow\infty$. It follows that \eqref{nozero} holds.
		
		Next, by \eqref{nozero}, Lemma~\ref{minorantnozero} and Lemma~\ref{conj}, we obtain
		\begin{align*}
		\rho\of{\Lambda(X)}  & =\sup_{U\in -L_{d,+}^1\sm\cb{0}}\of{\E\sqb{U^{\mathsf{T}}X}-f(U)}\\
		&= \sup_{U\in -L_{d,+}^1\sm\cb{0}}\sup_{\substack{V\in - L_{+}^1 \colon \E\sqb{V}=-1,\\ \Pr\cb{V=0,U\neq 0}=0}}\of{\E\sqb{U^{\mathsf{T}}X} +\E\sqb{Vg\of{\frac{U}{V}}1_{\cb{V<0}}}-\rho^\ast(V)}.
		\end{align*}
		
		Note that every $U\in - L_{d,+}^1\sm\cb{0}$ can be identified by a vector probability measure $\Q\in \M_d(\Pr)$ and a weight vector $w\in \R^d_+\sm\cb{0}$ (and vice versa) by the relationship
		\[
		w_i = -\E\sqb{U_i},\quad w_i\frac{d\Q_i}{d\Pr} = -U_i,\quad i\in\cb{1,\ldots,d}.
		\]
		(Note that $\Q_i$ is defined arbitrarily when $w_i=0$.) Similarly, every $V\in -L_{+}^1$ with $\E\sqb{V}=-1$ can be identified by a probability measure $\S\in\M(\Pr)$ by setting
		\begin{align}\label{dSdP}
		\frac{d\S}{d\Pr}=-V.
		\end{align}
		In this case, $\rho^\ast(V)=\a(\S)$; see \citet[Remark~4.18]{fs:sf}, for instance. With these changes of variables, the condition $\Pr\cb{V=0,U\neq0}=0$ becomes
		\[
		\Pr\cb{\frac{d\S}{d\Pr}=0,\ w\cdot\frac{d\Q}{d\Pr}\neq 0}=\Pr\of{\bigcup_{i=1}^d\cb{\frac{d\S}{d\Pr}=0,\ \frac{d(w_i\Q_i)}{d\Pr}>0}}=0,
		\]
		which is equivalent to having
		\[
		\Pr\cb{\frac{d\S}{d\Pr}=0,\frac{d(w_i\Q_i)}{d\Pr}> 0}=0
		\]
		for every $i\in\cb{1,\ldots,d}$. By Lemma~\ref{measures}, we conclude that $\Pr\cb{V=0,U\neq0}=0$ is equivalent to that $w_i\Q_i\ll\S$ for every $i\in\cb{1,\ldots,d}$. In particular, we may write
		\[
		\frac{U}{V}=\frac{w\cdot\frac{d\Q}{d\Pr}}{\frac{d\S}{d\Pr}}=\of{\frac{d(w_1\Q_1)}{d\S},\ldots,\frac{d(w_d\Q_d)}{d\S}}=w\cdot\frac{d\Q}{d\S}.
		\]
		(When $w_i=0$ for some $i\in\cb{1,\ldots,d}$, $\Q_i$ is defined arbitrarily and $w_i\frac{d\Q_i}{d\S}=0$ is understood.) As a result, we may write 
		\begin{align}
		\rho\of{\Lambda(X)}  
		&= \sup_{\substack{\Q\in\M_d(\Pr),\\ w\in\R^d_+\sm\cb{0}}}\sup_{\substack{ \S\in\M(\Pr)\colon\\ \forall i\colon w_i\Q_i\ll\S}}\of{w^{\mathsf{T}}\E^\Q\sqb{-X} -\E^\S\sqb{g\of{w\cdot \frac{d\Q}{d\S}}1_{\cb{\frac{d\S}{d\Pr}>0}}}-\a(\S)}\notag \\
		&=\sup_{\substack{\Q\in\M_d(\Pr),\\ w\in\R^d_+\sm\cb{0}}}\of{w^{\mathsf{T}}\E^\Q\sqb{-X}-\inf_{\substack{ \S\in\M(\Pr)\colon\\ \forall i\colon w_i\Q_i\ll\S}}\of{ \E^\S\sqb{g\of{w\cdot \frac{d\Q}{d\S}}}+\a(\S)}}\notag \\
		&=\sup_{\substack{\Q\in\M_d(\Pr),\\ w\in\R^d_+\sm\cb{0}}}\of{w^{\mathsf{T}}\E^\Q\sqb{-X}-\a^\s(\Q,w)}.\label{compdual}
		\end{align}
		
		Finally, recalling \eqref{ins-rep}, we obtain
		\begin{align*}
		R^\is(X) &= \cb{z\in\R^d \mid \mathbf{1}^{\mathsf{T}}z \geq \rho\of{\Lambda(X)}}\\
		&=\cb{z\in \R^d \mid \mathbf{1}^{\mathsf{T}}z\geq \sup_{\Q\in \M_d(\Pr), w\in\R^d_+\sm\cb{0}}\of{w^{\mathsf{T}}\E^{\Q}\sqb{-X}-\a^{\sys}(\Q,w)}}\\
		&=\bigcap_{\Q\in \M_d(\Pr), w\in\R^d_+\sm\cb{0}}\cb{z\in \R^d \mid \mathbf{1}^{\mathsf{T}}z\geq w^{\mathsf{T}}\E^{\Q}\sqb{-X}-\a^{\sys}(\Q,w)},
		\end{align*}
		which finishes the proof of the dual representation for $R^\is$.
		
		\item Using \eqref{compdual}, we obtain
		\begin{align*}
		R^{\sen}(X)
		&=\cb{z\in\R^d\mid\rho\of{\Lambda\of{X+z}}\leq 0}\\
		&=\Big\{z\in\R^d\mid \sup_{\substack{\Q\in\M_d(\Pr),\\ w\in\R^d_+\sm\cb{0}}}\of{w^{\mathsf{T}}\E^\Q\sqb{-\of{X+z}}-\a^\s(\Q,w)}\leq 0\Big\}\\
		&=\bigcap_{\Q\in\M_d(\Pr),w\in\R^d_+\sm\cb{0}}\E^\Q\sqb{-X}+\cb{z\in\R^d\mid w^{\mathsf{T}}z\geq  -\a^\s(\Q,w)},
		\end{align*}
		which finishes the proof of the dual representation for $R^\sen$.
	\end{enumerate}
\end{proof}

Recall that, for a linear space $\X$ and a set $A\subseteq \X$, the convex-analytic indicator function $I_A\colon\X\to\R\cup\cb{+\infty}$ of $A$ is defined by $I_A(x)=0$ for $x\in A$ and $I_A(x)=+\infty$ for $x\in \X\sm A$.

\begin{proof}[Proof of Proposition~\ref{scalarizations}]
	The dual representation of $\rho^\is$ follows as a direct consequence of the first part of Theorem~\ref{mainthm} since $R^\is$ is a halfspace-valued function.
	
	We prove the dual representation of $\rho^\sen_w$ for a given weight vector $w\in\R^d_+\sm\cb{0}$. Using the properties of the set-valued risk measure $R^\sen$ in Proposition~\ref{properties}, it can be checked that $\rho^\sen_{w}$ is a decreasing proper convex function. Moreover, we assume that it is weak* lower semicontinuous. Let us compute its conjugate function $(\rho^\sen_{w})^\ast$ at $U\in L_d^1$. Note that, as $\rho^\sen_{w}$ is a decreasing function, it holds $(\rho^\sen_{w})^\ast (U)=+\infty$ if $U\notin -L_{d,+}^1$. Let $U\in-L_{d,+}^1$. Since
	\begin{align*}
	\rho^\sen_w(X)=\inf_{z\in\R^d}\cb{w^{\mathsf{T}}z\mid \Lambda(X+z)\in\A}=\inf_{z\in\R^d}\of{w^{\mathsf{T}}z+I_{\A}(\Lambda(X+z))},
	\end{align*}
	we have
	\begin{align*}
	(\rho^\sen_w)^\ast(U)&=\sup_{\bar{X}\in L_d^\infty}\of{\E\sqb{U^{\mathsf{T}}\bar{X}}-\rho^\sen_w(\bar{X})}\\
	&=\sup_{z\in\R^d}\sup_{\bar{X}\in L_d^\infty}\of{\E\sqb{U^{\mathsf{T}}\bar{X}}-w^{\mathsf{T}}z-I_{\A}(\Lambda(\bar{X}+z))}\\
	&=\sup_{z\in\R^d}\sup_{X\in L_d^\infty}\of{\E\sqb{U^{\mathsf{T}}(X-z)}-w^{\mathsf{T}}z-I_{\A}(\Lambda(X))}\\
	&=\sup_{X\in L_d^\infty}\of{\E\sqb{U^{\mathsf{T}}X}-I_{\A}\circ\Lambda(X)}+\sup_{z\in\R^d}(\E\sqb{-U}-w)^{\mathsf{T}}z.
	\end{align*}
	Hence, $(\rho^\sen_{w})^\ast (U) = +\infty$ if $\E\sqb{U}\neq -w$. On the other hand, if $\E\sqb{U}=-w$, then
	\begin{align*}
	(\rho^\sen_{w})^\ast (U)=  \sup_{X\in L_d^\infty}\of{\E\sqb{U^{\mathsf{T}}X}-I_{\A}\circ\Lambda(X)}= \of{I_{\A}\circ\Lambda}^\ast (U).
	\end{align*}
	So
	\[
	(\rho^\sen_{w})^\ast (U)=  \of{I_{\A}\circ\Lambda}^\ast(U)+I_{\cb{\E\sqb{\cdot}+w=0}}(U).
	\]
	We calculate the conjugate $\of{I_{\A}\circ\Lambda}^\ast$ as the closure of a function following a similar route as in the proof of Lemma~\ref{conj} (for $(\rho\circ\Lambda)^\ast$ there). Using $h_V(Z)=\E\sqb{V\Lambda(Z)}$ for $V\in L^1, Z\in L_d^\infty$ as before, by \citet[Theorem~3.1]{radu}, $\of{I_{\A}\circ\Lambda}^\ast$ is the closure of the function $\bar{m}$ defined by
	\[
	\bar{m}(U)=\inf_{V\in -L_+^1}\of{h_V^\ast(U)+(I_\A)^\ast(V)}
	\]
	for every $U\in L_d^1$. As in the proof of Lemma~\ref{conj}, we have
	\[
	h_V^\ast(U)=\begin{cases}+\infty&\text{if }\Pr\cb{V=0,U\neq 0}>0,\\
	-\E\sqb{Vg\of{\frac{U}{V}}1_{\cb{V<0}}}&\text{if }\Pr\cb{V=0,U\neq 0}=0.\end{cases}
	\]
	In particular, similar to the proof of Lemma~\ref{conj}, it can be checked that $\bar{m}(U)=+\infty$ if $U\notin -L_{d,+}^1$. 
	
	
	
	Let $V\in -L^1_+$. If $V\equiv 0$, then $(I_\A)^\ast(V)=0$. Suppose that $\Pr\cb{V<0}>0$. We have $\E\sqb{-V}>0$ and
	\[
	(I_\A)^\ast(V)=\sup_{Y\in \A}\E\sqb{VY}=\E\sqb{-V}\sup_{Y\in\A}\E\sqb{\frac{V}{\E\sqb{-V}}Y}=\E\sqb{-V}\rho^\ast\of{\frac{V}{\E\sqb{-V}}}.
	\]
	
	Consequently, for every $U\in -L_{d,+}^1$, we have
	\begin{align*}
	\bar{m}(U)=\inf_{V\in -L_+^1}\cb{-\E\sqb{Vg\of{\frac{U}{V}}1_{\cb{V<0}}}+\E\sqb{-V}\rho^\ast\of{\frac{V}{\E\sqb{-V}}}\mid \Pr\cb{V=0,U\neq 0}=0},
	\end{align*}
	where $\E\sqb{-V}\rho^\ast(\frac{V}{\E\sqb{-V}})=0$ is understood when $V\equiv 0$. We also have $\bar{m}(U)=+\infty$ for every $U\notin -L_{d,+}^1$; that is, $\bar{m}(U)=m(-U)$ for every $U\in L_d^1$, where $m$ is the function defined in \eqref{mbar}.
	
	We have
	\[
	(\rho_w^\sen)^\ast(U)=(\cl \bar{m})(U)+I_{\cb{\E\sqb{\cdot}+w=0}}(U)
	\]
	for every $U\in L_d^1$. Since we assume that $\rho^\sen_w$ is weak* lower semicontinuous, by Fenchel-Moreau theorem,
	\[
	\rho^\sen_w(X)=\sup_{\substack{U\in -L_{d,+}^1\colon\\ \E[U]=-w}}\of{\E\sqb{U^{\mathsf{T}}X}-(\cl \bar{m})(U)}.
	\]
	Note that every $U\in -L_{d,+}^1$ with $\E\sqb{U}=-w$ can be identified by a vector probability measure $\Q\in \M_d(\Pr)$ (and vice versa) by the relationship
	\[
	w_i\frac{d\Q_i}{d\Pr}= -U_i, \quad i\in\cb{1,\ldots,d}
	\]
	so that
	\begin{align}\label{mbarrep}
	\rho^\sen_{w}(X) &=\sup_{\Q\in \M_d(\Pr)}\of{w^{\mathsf{T}}\E^\Q\sqb{-X}-(\cl \bar{m}) \of{-w\cdot\frac{d\Q}{d\Pr}}}.
	\end{align}
	Since $\bar{m}(-U)=m(U)$ for every $U\in L_d^1$, it follows that $(\cl \bar{m})(-U)=(\cl m)(U)$ for every $U\in L_d^1$. This and \eqref{mbarrep} finish the proof of \eqref{mrep}:
	\[
	\rho^\sen_{w}(X) =\sup_{\Q\in \M_d(\Pr)}\of{w^{\mathsf{T}}\E^\Q\sqb{-X}-(\cl m) \of{w\cdot\frac{d\Q}{d\Pr}}}.
	\]
	
	Next, suppose further that $m$ is a lower semicontinuous function so that
	\[
	(\cl m)(U)=m(U)
	\]
	for every $U\in L_d^1$. Let $U\in -L_{d,+}^1$ with $\E\sqb{U}=-w$. Then, for $V\equiv 0$, we have $\Pr\cb{V=0,U\neq 0}=\Pr\cb{U\neq 0}>0$ so that $h_V^\ast(U)=+\infty$. Hence, we may write
	\begin{equation}\label{vnonzero}
	m(-U)=\bar{m}(U)=\inf_{V\in -L_+^1\sm\cb{0}}\of{h_V^\ast(U)+(I_\A)^\ast(V)}
	\end{equation}
	in this case. In particular, for $\Q\in\M_d(\Pr)$, thanks to $\eqref{vnonzero}$, we have 
	\begin{align*}
	&(\cl m) \of{w\cdot\frac{d\Q}{d\Pr}}=m\of{w\cdot\frac{d\Q}{d\Pr}}\\
	&=\inf_{V\in -L_+^1\sm\cb{0}}\cb{-\E\sqb{Vg\of{-\frac{w\cdot\frac{d\Q}{d\Pr}}{V}}1_{\cb{V<0}}}+\E\sqb{-V}\rho^\ast\of{\frac{V}{\E\sqb{-V}}}\mid \Pr\cb{V=0,w\cdot\frac{d\Q}{d\Pr}\neq 0}=0}.
	\end{align*}
	Following a similar variable transform as in the proof of Theorem~\ref{mainthm}, we may write every $V\in -L_{+}^1\sm\cb{0}$ as
	\[
	-V=\lambda\frac{d\S}{d\Pr},
	\]
	where $\lambda =\E\sqb{-V}>0$ and $\S\in\M(\Pr)$ is such that $\frac{d\S}{d\Pr}=-\frac{V}{\lambda}$ (and vice versa). This calculation yields
	\begin{align*}
	m\of{w\cdot\frac{d\Q}{d\Pr}}&=\inf_{\lambda>0}\inf_{\substack{\S\in\M(\Pr)\colon \\
			\forall i\colon w_i\Q_i\ll \S} }\of{\lambda\E^\S\sqb{g\of{\frac{w}{\lambda}\cdot\frac{d\Q}{d\S}}1_{\cb{\frac{d\S}{d\Pr}>0}}}+\lambda\rho^\ast\of{\frac{d\S}{d\Pr}}}\\
	&=\inf_{\lambda>0}\inf_{\substack{\S\in\M(\Pr)\colon \\
			\forall i\colon w_i\Q_i\ll \S} }\of{\lambda\E^\S\sqb{g\of{\frac{w}{\lambda}\cdot\frac{d\Q}{d\S}}}+\lambda\a(\S)}.
	\end{align*}
	Using \eqref{mbarrep}, we have
	\begin{align*}
	\rho^\sen_{w}(X)
	&=\sup_{\Q\in \M_d(\Pr)}\of{w^{\mathsf{T}}\E^\Q\sqb{-X}-m \of{w\cdot\frac{d\Q}{d\Pr}}}\\
	&= \sup_{\Q\in \M_d(\Pr)}  \sup_{\lambda>0} \of{w^{\mathsf{T}}\E^\Q\sqb{-X} -\lambda \inf_{\substack{\S\in\M(\Pr)\colon\\ \forall i\colon w_i\Q_i\ll\S}}\of{\E^\S\sqb{g\of{\frac{w}{\lambda}\cdot\frac{d\Q}{d\S}}} +\alpha(\S)}}\\
	&= \sup_{\Q\in \M_d(\Pr)}\sup_{\lambda>0}   \of{w^{\mathsf{T}}\E^\Q\sqb{-X} -\lambda \a^\sys(\Q,\frac{w}{\lambda})}\\
	&= \sup_{\Q\in \M_d(\Pr)}  \of{w^{\mathsf{T}}\E^\Q\sqb{-X} -\tilde{\a}^\s (\Q,w)},
	\end{align*}
	which yields \eqref{scsimple}.
	
	Finally, suppose that there exist $\hat{X}\in L_d^\infty$ and a weak* neighborhood $A$ of $\Lambda(\hat{X})$ with $A\subseteq\A$. Then, $I_\A$ is bounded from above (by zero) on $A$. By \citet[Theorem~2.2.9]{zalinescu}, $I_\A$ is weak* continuous at $\Lambda(\hat{X})$. By the stronger conjugation result \citet[Theorem~2.8.10(iii)]{zalinescu} (together with the weaker one \citet[Theorem~3.1]{radu}), we precisely have $(I_\A\circ\Lambda)^\ast=\bar{m}=\cl(\bar{m})$. Hence, $m$ is lower semicontinuous and \eqref{scsimple} holds in this special case.
\end{proof}

\small 
\section*{Acknowledgments}
This material is based upon work supported by the National Science Foundation under Grant No.~1321794 and the OeNB anniversary fund, project number 17793. Part of the manuscript was written when the first author visited Vienna University of Economics and Business. The authors would like to thank an anonymous referee for useful comments and suggestions that helped improving the manuscript. The first author would like to thank Fabio Bellini, Zachary Feinstein and Daniel Ocone for fruitful discussions. The authors would like to thank Cosimo-Andrea Munari and Maria Arduca for pointing out an issue in an earlier version of the paper, as well as Alexander Smirnow and Jana Hlavinova for pointing out a simplification in the proof of the second part of Theorem~\ref{mainthm}.

\bibliographystyle{named}

\end{document}